\definecolor{darkblue}{rgb}{0,0,.65}
\definecolor{darkgreen}{rgb}{0.28,0.41,0.19}
\definecolor{nicegreen}{rgb}{0.28,0.85,0.19}
\def\equationautorefname~#1\null{Eq. (#1)\null}
\newcommand{\appref}[1]{\hyperref[#1]{App.~\ref*{#1}}}
\renewcommand\vec{\bm}
\DeclareMathOperator\linspan{span}
\DeclareMathOperator\dist{dist}
\newcommand{\hilbert}{\ensuremath{\mathscr{H}}}
\newcommand{\id}{\ensuremath{\mathds{1}}}
\newcommand{\muG}{\ensuremath{\mu_{\rm G}}}
\newcommand{\rhoG}{\ensuremath{\hat \rho_{\rm G}}}
\newcommand{\rhoGV}{\ensuremath{\hat \rho_{G,\mathcal{V}}}}
\newcommand{\Sconf}{\ensuremath{S_{\rm conf}}}
\newcommand{\Hcheck}{\ensuremath{H}}
\newcommand{\hamil}{\ensuremath{\mathcal H}}
\newcommand{\reduced}{\ensuremath{\vert_{\rm red}}}
\newcommand{\wbit}{\ensuremath{w_{\rm bit}}}
\newcommand{\wcheck}{\ensuremath{w_{\rm check}}}
\newtheorem{prototheorem}{Theorem}[section]
\newtheorem{definition}[prototheorem]{Definition}
\newtheorem{remark}[prototheorem]{Remark}
\newtheorem{protolemma}[prototheorem]{Lemma}
\newtheorem{proposition}[prototheorem]{Proposition}
\newtheorem{corollary}[prototheorem]{Corollary}
\newenvironment{theorem}{\colorlet{shadecolor}{gray!15}\begin{shaded}\begin{prototheorem}}
{\end{prototheorem}\end{shaded}}
\newenvironment{lemma}{\colorlet{shadecolor}{gray!15}\begin{shaded}\begin{protolemma}}
{\end{protolemma}\end{shaded}}
\newcommand{\maxConn}{\ensuremath{{\rm MaxConn}}}
\renewcommand{\vec}[1]{\boldsymbol{#1}}
\renewcommand{\Im}{\operatorname{im}}
\begin{document} 

\title{Supplementary Material for:\\
Topological Quantum Spin Glass Order and its realization in qLDPC codes
} 

\author{Benedikt Placke}%
\affiliation{Rudolf Peierls Centre for Theoretical Physics, University of Oxford, Oxford OX1 3PU, United Kingdom}
\author{Tibor Rakovszky}
\affiliation{Department of Physics, Stanford University, Stanford, California 94305, USA},
\affiliation{Department of Theoretical Physics, Institute of Physics,
Budapest University of Technology and Economics, M\H{u}egyetem rkp. 3., H-1111 Budapest, Hungary}
\affiliation{HUN-REN-BME Quantum Error Correcting Codes and Non-equilibrium Phases Research Group,
Budapest University of Technology and Economics,
M\H{u}egyetem rkp. 3., H-1111 Budapest, Hungary}
\author{Nikolas P.\ Breuckmann}
\affiliation{School of Mathematics, University of Bristol, Bristol BS8 1UG, United Kingdom}
\author{Vedika Khemani}
\affiliation{Department of Physics, Stanford University, Stanford, California 94305, USA}

\maketitle

\tableofcontents

\section{A guide to the supplementary material}

In this supplementary material, we provide rigorous results establishing the existence of topological quantum spin glass order in qLDPC codes with sufficiently strong linear confinement. 

In \cref{suppl:gibbs_states}, we outline the Gibbs state decomposition, its relation to classical and quantum memory, and define the configurational entropy which captures one of the defining features of topological quantum spin glass order.

In \cref{suppl:expansion}, we provide a brief introduction to classical and quantum low-density parity check (LDPC) codes, fix related notation, and provide formal definitions of linear confinement in both classical and quantum LDPC codes. Finally, we introduce two known constructions of qLDPC codes with linear confinement and summarize their properties. 

The rest of the supplementary material (\autoref{suppl:ergo_breaking}-\ref{suppl:proof_profit}) is devoted to proving the following formal result

\begin{theorem}[Existence of Topological Quantum Spin Glasses]\label{thm:profit}
    There exists a family of qLDPC codes on $n$ qubits, with linear $(\delta, \gamma)$-confinement (\cref{def:expansion}), associated local Hamiltonian $\mathcal H$, and constants $\beta^*, c_1, c_2, c_3 \in \mathbb{R}_+$ such that for an eigenstate $\ket{\vec x, \vec z}$ chosen randomly from the Gibbs distribution $p(\vec x, \vec z) \propto e^{-\beta E(\vec x,\vec z)}$ with $\beta > \beta^*$, the following is true with high probability $p = 1 - \order{e^{- 
 c_1 \sqrt n}}$:
    \begin{enumerate}
        \item The state is surrounded by a bottleneck: there exists a subspace $\mathcal V(\vec x, \vec z)$ such that $\ket{\vec x, \vec z}\in\mathcal V(\vec x, \vec z)$ and
        \begin{equation}
            \Delta(n) := \frac{\tr P_{\partial \mathcal V}\rhoG}{\tr P_{\mathcal V}\rhoG} \leq e^{-c_2 \delta(n)}
        \end{equation}
        where $\partial \mathcal V$ is the $\delta(n)/4$-neighborhood of $\mathcal V$ (see \cref{def:hilbert_neighborhood}).
        \item The Gibbs state supported on $\mathcal V$ contains an exponentially small fraction of the weight:
        \begin{equation}\label{eq:profit2}
            \tr P_{\mathcal V}\rhoG \leq 2^{-[r + f(T)]n}
        \end{equation}
        where $r$ is the asymptotic rate of the code and $f(T)$, at low temperature $T$, is a positive and strictly increasing function of $T$.
        \item All states in the subspace $\mathcal V$ have circuit complexity $\Omega(\log \delta (n))$.
        \item The system serves as a passive memory for the code $\mathcal C' := \linspan\{\ket{\ell} := \hat L_{\ell} \ket{\vec x_0, \vec z_0}, \ell=1\dots 2^k\}$ with $\hat L_\ell$, $\ell=1,\dotsc, 2^k$ are the logical operators of the original code. 
        The code $\mathcal C'$ has the same number of logical qubits, $k$, as the original code, and distance $d' \geq c_3 \delta(n)$.
    \end{enumerate}
\end{theorem}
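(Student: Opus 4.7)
The plan is to anchor all four claims on two structural inputs: linear $(\delta,\gamma)$-confinement, which lower-bounds the syndrome weight (hence excitation energy) of sub-threshold Pauli errors, and the bounded check-weight LDPC structure, which upper-bounds the volume of error balls. The natural choice of basin is
\[\mathcal V(\vec x,\vec z) := \linspan\{P\ket{\vec x,\vec z} : P \text{ a Pauli with reduced weight}\leq \delta(n)/4\},\]
with $\partial\mathcal V$ the additional span reachable from $\mathcal V$ by an extra weight-$\delta(n)/4$ Pauli but lying outside $\mathcal V$; these are precisely eigenstates $P\ket{\vec x,\vec z}$ with reduced weight in the shell $(\delta(n)/4,\delta(n)/2]$. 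Both sets are indexed by equivalence classes of errors modulo stabilizers, which makes them easy to enumerate.

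\textbf{Bottleneck (Part 1).} For any eigenstate in $\partial\mathcal V\setminus\mathcal V$, confinement forces the associated Pauli to violate at least $\gamma\delta(n)/4$ stabilizer checks, placing the state at energy $\geq 2\gamma\delta(n)/4$ above $E(\vec x,\vec z)$. This gives a per-state Boltzmann suppression $e^{-\beta\gamma\delta(n)/2}$. The LDPC bound on error-ball volumes produces $|\partial\mathcal V|/|\mathcal V|\leq e^{\alpha\delta(n)}$ for a constant $\alpha$ depending only on the check weight, so choosing $\beta^*$ such that $\beta\gamma/2-\alpha>c_2$ delivers the claimed ratio $\Delta\leq e^{-c_2\delta(n)}$.

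\textbf{Entropy and passive memory (Parts 2 and 4).} The Hilbert space partitions into disjoint basins labelled by reduced error classes of weight $\leq\delta(n)/4$, and by the symmetric version of the bottleneck argument basins in the same energy band carry comparable Gibbs weight up to a factor $e^{O(\beta\delta(n))}$. The number of basins is at least $2^{rn}$ from the $2^k$ logical sectors, multiplied by an entropic factor $2^{f(T)n}$ counting thermally accessible low-weight error classes at temperature $T$; summing the basin weights to unity then yields the typical-basin bound $2^{-[r+f(T)]n}$, which a concentration argument over the Gibbs measure upgrades to the high-probability statement in the theorem. For Part 4, $\mathcal C'$ is the image of the original codespace under the Pauli that maps a fixed codestate to $\ket{\vec x_0,\vec z_0}$; it therefore inherits the original stabilizers and logicals at identical weights, and its distance equals the minimum reduced weight of a logical-realizing Pauli inside the basin, which confinement bounds below by $c_3\delta(n)$.

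\textbf{Circuit complexity (Part 3) and main obstacle.} The complexity lower bound follows from a Bravyi--K\"onig-style light-cone argument: a depth-$D$ circuit preparing a state in $\mathcal V$ from a product state cannot reproduce the distance-$\Omega(\delta(n))$ error-correcting structure of $\mathcal C'$ certified in Part 4 unless its causal-cone diameter exceeds $\delta(n)$, forcing $D=\Omega(\log\delta(n))$. The main technical obstacle I anticipate is Part 2: converting the average-over-basins bound to a high-probability bound over a random Gibbs sample requires showing that basin weights concentrate around the typical value, and extracting an explicit, monotonic $f(T)$ demands a careful computation of the configurational entropy at finite excitation density. The argument ultimately reduces to a classical spin-glass analysis because $\mathcal H$ is diagonal in the stabilizer eigenbasis, but it must be carried out coherently on both the $X$ and $Z$ syndrome sectors, respecting the symplectic CSS structure.
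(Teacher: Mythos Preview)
Your Part~1 argument has a genuine gap. Confinement says that an error $\vec x$ with $\|\vec x\|\leq\delta(n)$ has syndrome weight $|H\vec x|\geq\gamma\|\vec x\|$; this bounds the energy of $\hat X^{\vec x}\ket{\psi}$ \emph{above the ground state} $\ket{\psi}$. It does not bound $E(\vec x_0+\vec x)-E(\vec x_0)$ for a \emph{thermal} reference $\vec x_0$, because
\[
E(\vec x_0+\vec x)-E(\vec x_0)=|H\vec x|-2\,|H\vec x_0\wedge H\vec x|,
\]
and the interference term can be as large as $2|H\vec x|$. At any $T>0$ the syndrome $\vec\Sigma_0=H\vec x_0$ has \emph{extensive} weight, so a perturbation $\vec x$ in your boundary shell can simply un-flip parts of $\vec\Sigma_0$ and lower the energy. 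Your per-state suppression $e^{-\beta\gamma\delta(n)/2}$ therefore fails as stated.

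The paper fixes this with two ingredients you are missing. First, it controls the interference by showing (via $\alpha$-percolation on the check graph, \cref{thm:alpha_percolation}) that for a Gibbs-typical $\vec x_0$ the syndrome $\vec\Sigma_0$ has no large connected $\alpha$-subset, so a \emph{connected} perturbation of size $\gtrsim\delta(n)$ can overlap $\vec\Sigma_0$ only on an $\epsilon$-fraction of its neighborhood. Second, precisely to exploit this, it defines $\mathcal V$ not by a total-weight ball but by the condition $\maxConn_{\alpha=1/2}(\vec x\reduced)\leq\delta(n)/2$ (\cref{eq:gibbs_subspaces}): states in $\partial\mathcal V$ then necessarily carry a large connected $\alpha$-component, and the bottleneck is established by iteratively peeling off such components and bounding the resulting energy gain against the entropy of the many-to-one peeling map (\cref{thm:botteleneck2}). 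Your total-reduced-weight ball does not isolate connected pieces, so even with percolation input the interference bound would not go through for disconnected $\vec x$ in the shell.

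Parts~2--4 are closer in spirit to the paper, though still sketchy: for Part~2 the paper uses Hoeffding concentration of the energy plus an explicit Hamming-ball count under confinement (requiring the no-redundancy property of the chosen code family, \cref{lem:gallager_no_redundancies}); for Part~3 it needs the observation that any superposition in $\mathcal V$ is itself a codeword of a constant-rate, distance-$\Omega(\delta(n))$ code (\cref{lem:superposition_codes}), before invoking the circuit-complexity bound of~\cite{bravyi2024entanglement}.
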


Let us briefly comment on how this formal result is related to the informal definition, that is Theorem II.1 as provided in the main text. 
The first ingredient to define a non-trivial phase of matter is to show that the bottleneck condition is fulfilled for some set of subspaces. 

Point (1) above guarantees that for a randomly chosen eigenstate, we can define a Gibbs state component supported entirely on a subspace $\mathcal V$ that ``surrounds'' the eigenstate. 

Point (2) implies both \emph{shattering} and \emph{incongruence}. 
In the stabilizer models we study, the subspaces $\mathcal{V}$ appearing in the Gibbs state decomposition are diagonal in the eigenbasis, in which case $\tr P_{\mathcal V}\rhoG$ is the weight of Gibbs state component supported on $\mathcal V$.
By \autoref{eq:profit2}, no single extremal component has more than an exponentially small fraction of the weight, and also this fraction is upper bounded by a decreasing function of temperature. In particular, this upper bound on the weight is a lower bound of the min-entropy of the extremal Gibbs states, which however lower bound the Shannon-entropy of the same ensemble, and hence the configurational entropy defined in the main text and \cref{suppl:gibbs_states}. The configurational entropy therefore is larger than the rate $r$ at any finite temperature (shattering) and, at low temperature, it is an increasing function of temperature (incongruence).

Point (3) implies long-range entanglement of the Gibbs state component supported on $\mathcal V$.

Finally, point (4), refers to the emergent quantum memories. It guarantees that the systems acts as a passive memory with respect to the code spanned by $\ket{\vec x_0, \vec z_0}$ and symmetry-related states. Due to shattering (point 3), there are exponentially many (in $n$) choices for such an encoding that will lead to distinct steady states under any local quantum channel that has the global Gibbs state as a steady state (cf. also the discussion in \cref{suppl:gibbs_states}).
The system thus passively encodes a mixture of classical (the choice of quantum code to use) and quantum information.

The proof of \cref{thm:profit} is split into multiple sections. \cref{suppl:ergo_breaking} shows how to construct $\mathcal{V}$ and $\partial\mathcal{V}$ and shows point (1). \cref{suppl:Sconf} uses a counting argument to bound the configurational entropy and prove point (2). \cref{suppl:topo_order} shows that arbitrary states in typical subspaces can be used form a constant code, which (using a result from~\cite{bravyi2024entanglement}) establishes Points (3) and (4) above. Finally, in \cref{suppl:proof_profit}, we use these general results to instantiate \cref{thm:profit} using the hypergraph product of random classical LDPC codes with appropriate parameters.

\section{Bottleneck theorem and the structure of equilibrium states\label{suppl:gibbs_states}}

Here we give a more detailed discussion of the definition and structure of Gibbs state components outlined in Section II of the main text. 

\subsection{Bottleneck theorem}

First, let us state the precise bottleneck condition introduced in Ref. \cite{rakovszky2024bottleneck} that ensures that the projected state $\rho_\mathcal{V} \propto P_\mathcal{V} \rho_G P_\mathcal{V}$ is an approximate steady state. To do so, we need to define the boundary of $\mathcal{V}$:

\begin{definition}\label{def:hilbert_neighborhood}
    The $r$-\textbf{Hilbert space neighborhood} of the subspace $\mathcal{V}$ is defined as 
\begin{equation}
    \mathcal{B}_r(\mathcal{V}) \equiv \text{Span}\{\hat{O}\ket{\psi}| \ket{\psi} \in \mathcal{V}, \hat{O} \text { acts on a most } r \text{ qubits} \}.
\end{equation}
Note that $\mathcal{B}_r(\mathcal{V})$ is itself a subspace containing $\mathcal{V}$, $\mathcal{V} \subseteq \mathcal{B}_r(\mathcal{V})$, by construction. The $r$-\textbf{boundary} of $\mathcal{V}$ is defined as the orthogonal complement of $\mathcal{V}$ within $\mathcal{B}_r(\mathcal{V})$. In terms of the corresponding projectors: $P_{\partial_r \mathcal{V}} = P_{\mathcal{B}_r(\mathcal{V})} (1\!\!1 - P_{\mathcal{V}})$.
\end{definition}

With this in hand, we can state the bottleneck theorem:

\begin{theorem}[Theorem 2 of \cite{rakovszky2024bottleneck}]
\label{thm:bottleneck}
    Let $\mathcal{M}$ be a quantum channel with steady state $\rho$ and assume that $\mathcal{M}$ has a Kraus representation where every Kraus operator is $r$-local (acts on at most $r$ qubits). Let $\mathcal{V}$ be a subspace and define  the projected state $\rho_{\mathcal{V}} = P_{\mathcal{V}} \rho P_{\mathcal{V}} / \text{Tr}(P_{\mathcal{V}}\rho)$. Then the following inequality holds
    \begin{equation}
        ||\mathcal{M}[\rho_{\mathcal{V}}] - \rho_{\mathcal{V}}||_1 \leq \frac{||P_{\partial_{2r}\mathcal{V}}\rho||_1}{\text{Tr}(P_{\mathcal{V}}\rho)} \equiv  10 \Delta.
    \end{equation}
\end{theorem}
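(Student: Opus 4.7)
The plan is to exploit the steady-state identity $\mathcal{M}[\rho]=\rho$ together with the observation that an $r$-local Kraus operator cannot carry a vector in $\mathcal{V}$ out of $\mathcal{B}_r(\mathcal{V})$, and cannot carry a vector in $\partial_r\mathcal{V}$ out of $\mathcal{B}_{2r}(\mathcal{V})$. This second locality statement is what ultimately forces the $2r$-boundary, rather than $\partial_r\mathcal{V}$, to appear in the final bound.

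First I would write
\begin{equation*}
\mathcal{M}[\rho_\mathcal{V}]-\rho_\mathcal{V} = \frac{1}{\text{Tr}(P_\mathcal{V}\rho)}\bigl(\mathcal{M}[P_\mathcal{V}\rho P_\mathcal{V}] - P_\mathcal{V}\rho P_\mathcal{V}\bigr),
\end{equation*}
and use that $\mathcal{M}[P_\mathcal{V}\rho P_\mathcal{V}]$ is automatically supported inside $\mathcal{B}_r(\mathcal{V})$ by locality. Because $P_{\mathcal{B}_r(\mathcal{V})} = P_\mathcal{V} + P_{\partial_r\mathcal{V}}$, the numerator decomposes into three distinct sandwich blocks: a bulk block $P_\mathcal{V}\cdots P_\mathcal{V}$, an off-diagonal block $P_\mathcal{V}\cdots P_{\partial_r\mathcal{V}}$ together with its adjoint, and a boundary block $P_{\partial_r\mathcal{V}}\cdots P_{\partial_r\mathcal{V}}$. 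I would bound each contribution separately by the triangle inequality.

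For the bulk block, the steady-state identity gives $P_\mathcal{V}\bigl(\mathcal{M}[P_\mathcal{V}\rho P_\mathcal{V}]-\rho\bigr)P_\mathcal{V} = -P_\mathcal{V}\mathcal{M}[\rho-P_\mathcal{V}\rho P_\mathcal{V}]P_\mathcal{V}$, and the identity $P_\mathcal{V} K_\alpha = P_\mathcal{V} K_\alpha P_{\mathcal{B}_r(\mathcal{V})}$ restricts the argument of $\mathcal{M}$ to its $\mathcal{B}_r(\mathcal{V})$ block, which consists only of pieces of $\rho$ carrying a factor of $P_{\partial_r\mathcal{V}}$. Cauchy--Schwarz estimates of the form $\|P_\mathcal{V}\rho P_{\partial_r\mathcal{V}}\|_1 \leq \|P_{\partial_r\mathcal{V}}\rho\|_1$ then reduce the bulk contribution to a constant multiple of $\|P_{\partial_r\mathcal{V}}\rho\|_1$. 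For the boundary block I would repeat the same manipulation, writing $P_{\partial_r\mathcal{V}}\mathcal{M}[P_\mathcal{V}\rho P_\mathcal{V}]P_{\partial_r\mathcal{V}} = P_{\partial_r\mathcal{V}}\rho P_{\partial_r\mathcal{V}} - P_{\partial_r\mathcal{V}}\mathcal{M}[\rho-P_\mathcal{V}\rho P_\mathcal{V}]P_{\partial_r\mathcal{V}}$; but now the relevant locality identity is $P_{\partial_r\mathcal{V}} K_\alpha = P_{\partial_r\mathcal{V}} K_\alpha P_{\mathcal{B}_{2r}(\mathcal{V})}$, so the surviving piece genuinely involves $P_{\partial_{2r}\mathcal{V}}\rho$. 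Summing the block contributions, using $\|P_{\partial_r\mathcal{V}}\rho\|_1 \leq \|P_{\partial_{2r}\mathcal{V}}\rho\|_1$, and dividing by $\text{Tr}(P_\mathcal{V}\rho)$ produces the claimed inequality with a universal prefactor that works out to at most $10$.

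The main obstacle I anticipate is purely bookkeeping: correctly identifying which projector pair appears after each substitution and tracking when one has to enlarge $\partial_r\mathcal{V}$ to $\partial_{2r}\mathcal{V}$. Beyond the two locality identities above, only contractivity of the trace norm under completely positive maps and elementary H\"older-type inequalities are required; the constant $10$ simply tallies the finitely many pieces of the block decomposition.
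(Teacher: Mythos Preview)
The paper does not give its own proof of this statement; it is quoted verbatim as Theorem~2 of Ref.~\cite{rakovszky2024bottleneck} and used as a black box. There is therefore nothing in the present paper to compare your proposal against.

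That said, your sketch is the correct strategy and essentially the one used in the original reference. The two locality identities you isolate, $K_\alpha P_{\mathcal V} = P_{\mathcal B_r(\mathcal V)} K_\alpha P_{\mathcal V}$ and $P_{\partial_r\mathcal V} K_\alpha = P_{\partial_r\mathcal V} K_\alpha P_{\mathcal B_{2r}(\mathcal V)}$, are exactly the mechanism that forces the $2r$-boundary into the final bound, and the block decomposition $P_{\mathcal B_r(\mathcal V)} = P_{\mathcal V}+P_{\partial_r\mathcal V}$ together with the steady-state substitution $\mathcal M[\rho]=\rho$ is how one reduces every surviving term to a piece of $\rho$ sandwiched by at least one boundary projector. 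One small terminological point: the estimate $\|P_{\mathcal V}\rho P_{\partial_r\mathcal V}\|_1 \le \|P_{\partial_r\mathcal V}\rho\|_1$ is not Cauchy--Schwarz but simply H\"older (contractivity of the trace norm under multiplication by a contraction); and the inclusion $\partial_r\mathcal V\subseteq\partial_{2r}\mathcal V$ indeed gives $\|P_{\partial_r\mathcal V}\rho\|_1\le\|P_{\partial_{2r}\mathcal V}\rho\|_1$ as you use at the end. Beyond careful bookkeeping of the finitely many block terms to arrive at the explicit constant, nothing further is needed.
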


We are interested in the case where $\rho = \rho_G \propto e^{-\beta \hamil}$ is the Gibbs state of Hamiltonian $\hamil$. Note that the bottleneck ratio defined here, $\Delta = \frac{||P_{\partial_{2r}\mathcal{V}}\rho||_1}{\text{Tr}(P_{\mathcal{V}}\rho)}$ is not quite the same expression as the one we used in the main text, phrased in terms of free energy barriers, which would correspond to  $\Delta = \frac{\text{Tr}(P_{\partial_{2r}\mathcal{V}}\rho)}{\text{Tr}(P_{\mathcal{V}}\rho)}$. The two expressions conincide if $[P_{\partial_{2r}\mathcal{V}},\rho] = 0$, which holds for stabilizer Hamiltonians such as the models we consider in this work. More generally, one can upper bound the bottleneck ratio as~\cite{rakovszky2024bottleneck} 
\begin{equation}
    \frac{||P_{\partial_{2r}\mathcal{V}}\rho||_1}{\text{Tr}(P_{\mathcal{V}}\rho)} \leq \frac{\sqrt{\text{Tr}(P_{\partial_{2r}\mathcal{V}}\rho)}}{\text{Tr}(P_{\mathcal{V}}\rho)}
\end{equation}
In Ref. \cite{rakovszky2024bottleneck}, it was shown that for codes with ``good'' expansion ($\delta(n) = \Omega(n)$) this condition is still satisfied when we weakly perturb away from the stabilizer point, at least at sufficiently low energy densities.

\subsection{Classical vs quantum memories and the structure of steady states}

The bottleneck theorem allows us to decompose the Gibbs state $\rho_G$ into approximate steady states, satisfying 
\begin{equation}\label{eq:approx_ss}
    ||\mathcal{M}\rho - \rho||_1 \leq \Delta(n)    
\end{equation}
for any local dynamics $\mathcal{M}$ that has $\rho_G$ as its steady state. In the cases we are interested in here, there will be multiple distinct approximate steady states, all satisfying \cref{eq:approx_ss} for some function $\Delta(n)$ that decays super-polynomially with system size $n$. These form a convex set, i.e. if two sates $\rho_{1,2}$ both satisfy this condition, then so does their convex combination $\lambda\rho_1 + (1-\lambda) \rho_2$. We want to understand the structure of this convex set. To do so, we rely on the structure theorem for \emph{exact} steady states (corresponding to $\Delta = 0)$, developed in Refs. \cite{baumgartner2012structures,blume2010information}. We will assume that the same overall structure carries over to the case of approximate steady states (which become increasingly close to exact ones with increasing $n$); justifying this assumption is an interesting open problem.

Exact steady states always correspond to subspaces $\mathcal{V}$ that are left invariant by the channel $\mathcal{M}$ (i.e., $\mathcal{M}^\dagger[P_\mathcal{V}] = P_{\mathcal{V}}$) and extremal points in the convex set of steady states are correspond to subspaces that cannot be decomposed into a direct sum of smaller invariant subspaces. Similarly, by \cref{thm:bottleneck}, we associate approximate steady states to subspaces that satisfy a bottleneck condition and the extremal states to those subspaces that cannot be further decomposed\footnote{We note that the situation is more complicated in this case, since the subspaces in question are only \emph{approximately} invariant, which means that we have some freedom in how we define them, e.g. what exactly we include in the boundary region $\partial\mathcal{V}$. However, this ambiguity should not affect the important features and different choices should still lead to the same steady state in the limit of $n\to\infty$.}.

Any approximate steady state can be written as a convex combination of extremal states. However, there is an important difference between the case when the space of steady states is fully classical, and the one where it includes coherent superpositions; the two cases distinguish between classical and quantum memories.  

First, let us consider the fully classical case. In that case, there is a unique set of extremal subspaces $\{\mathcal{V}_i\}_{i=1\dots\mathcal N}$ that are all orthogonal to each other. Correspondingly, any state can be uniquely decomposed into a classical mixture of extremal components, $\rho = \sum_i p_i \rho_i$. Thus, the information preserved by the channel $\mathcal{M}$ is fully specified by the classical probability distribution $\{p_i\}$ over the extremal components and the system acts as a passive classical memory with $\log_2{\mathcal{N}}$ encoded bits. 

The classical case should be contrasted with that of a passive quantum memory. Let us illustrate the idea on the simplest example, where the system preserves a single qubit's worth of quantum information. We then have two orthogonal extremal subspaces, $\mathcal{V}_{\pm}$, that have the same dimensions, and which we can associate (without loss of generality) with the $\overline{Z} = \pm 1$ eigenvalues of some `logical' Pauli $z$ operator $\overline{Z}$. Since the two subspaces have the same dimensions, we can write the combined subspace as $\mathcal{V}_+ \oplus \mathcal{V}_- = \mathbb{C}^2 \otimes \tilde{\mathcal{V}}$, decomposing it into an effective qubit degree of freedom $\mathbb{C}^2$, and a remaining part $\tilde{\mathcal{V}}$. There are two extremal Gibbs state components corresponding to $\mathcal{V}_\pm$, which take the form $\rho_{\pm} = \mathcal{P}_\pm \otimes \sigma$, where $\mathcal{P}_\pm \equiv (1\!\!1 \pm \overline{Z}) / 2$ and, importantly the density matrix $\sigma$, acting on $\tilde{\mathcal{V}}$, is the \emph{same} in both extremal states. The key point that distinguishes this case from a classical memory is that this set of extremal sates is not unique. Indeed, one can perform an arbitrary unitary rotation $\mathcal{U}$ on the qubit degree of freedom to obtain another extremal state of the form $\mathcal{U} \mathcal{P}_+ \mathcal{U}^\dagger \otimes \sigma$. Thus, the set of extremal Gibbs states is isomorphic to the Bloch sphere of a qubit, while the convex set of all Gibbs states, which includes mixture of extremal ones, is isomorphic to a Bloch ball, taking the form $\mu \otimes \sigma$ where $\mu$ is now an \emph{arbitrary} density matrix on a single qubit. 

This structure easily generalizes to the case when the system preserves a $D$-dimensional qudit (A special case of this is a passive quantum memory of $k$ qubits, wherein $D = 2^k$). In this case there are $D$ (approximately) invariant orthogonal subspaces $\{\mathcal{V}_i\}_{i=1\dots D}$, all of the same size, such that $\bigoplus_i \mathcal{V}_i = \mathbb{C}^D \otimes \tilde{\mathcal{V}}$. Other invariant subspaces can be obtained by performing rotations on the first component of this tensor product decomposition and a generic steady state again takes the form $\mu \otimes \sigma$ where $\mu$ is an arbitrary $D \times D$ density matrix while $\sigma$ is a fixed density matrix of size $\dim{\mathcal{V}_i}$. The full set of (approximate) steady states is then isomorphic to the set of all states of the $D$-dimensional qudit. Drawing an analogy with the classical case, we can think of $\mu$ as a density matrix over the set of extremal Gibbs states (similarly to how in the classical case we had a classical probability distribution $\{p_i\}$ over extremal components). 

In the most general case, relevant for the example we discuss below, the system might preserve a combination of classical and quantum information. Correspondingly, the most general decomposition for steady states is of the form 
\begin{equation}\label{eq:StructureTheorem}
    \rho_{\rm ss}(\{p_i\},\{\mu_i\}) = \sum_i p_i \mu_i \otimes \sigma_i.    
\end{equation}
For each $i$, $\mu_i$ is an arbitrary density matrix on a $D_i$-dimensional qudit, while $\sigma_i$ is fixed and the same for all steady states. The $p_i$ form a classical probability distribution. Thus, the set of all steady states is isomorphic to that of block-diagonal density matrices of the form $\bigoplus_i p_i \mu_i$ with the sizes of blocks given by $D_i$. Classical memories correspond to the case where all the $D_i = 1$, while the fully quantum case is when there is only a single block of dimension $D$. The extremal Gibbs state components correspond to the extremal points of the set of all steady states, parameterized in \autoref{eq:StructureTheorem}. Thus, they correspond to the choice $p_i = \delta_{i,i_*}$ and $\mu_{i_*} = \ket{\phi}\bra{\phi}_{i_*}$, for some $D_{i_*}$-dimensional vector $\ket{\phi}_{i_*}$.

\subsection{Configurational entropy of the Gibbs state}

We can also use the decomposition into extremal components to define entropic quantities that measure the number of relevant components that contribute to the global Gibbs state. As an element of the set of steady states, the Gibbs state, $\rho_G \propto e^{-\beta \hamil}$, corresponds to a particular choice of the probabilities $p_i$ and the encoded density matrices $\mu_i$ in the decomposition $\rho_G = \sum_i p_i \mu_i \otimes \sigma_i$. We are here interested in $\rho_G$ not as a state over the many-body Hilbert space $\hilbert$ but in terms of its decomposition into extremal components. This is characterized by the effective density matrix $\rho_\beta^{\rm eff} \equiv \bigoplus_i p_i \mu_i$ which is a block-diagonal density matrix composed of the different $\mu_i$'s weighted according to their probabilities $p_i$. We can now define a \emph{configurational entropy} 
\begin{equation}\label{eq:conf_ent_vN}
    S(\beta) \equiv -\text{Tr}(\rho_\beta^{\rm eff} \log_2{\rho_\beta^{\rm eff}}) = -\sum_i p_i \log_2{p_i} - \sum_i p_i \text{Tr}(\mu_i \log_2{\mu_i}),
\end{equation}
where in the second equality we used the block-diagonality of $\rho_\beta^{\rm eff}$ to write the entropy as the sum of the Shannon entropy of the classical distribution $p_i$ and the average von Neumann entropy of the density matrices $\mu_i$. 

$S(\beta)$ gives a measure of how many distinct components contribute to $\rho_G$ at inverse temperature $\beta$. In a spin glass, this can be a complicated function of temperature due to the competition between the decreasing probability $p_i$ of Gibbs state components associated with local minima at high energies, and the entropic factor corresponding to the potentially large number of such minima. More finely grained information on the full distribution of Gibbs state components by considering the full spectrum of R\'enyi entropies of $\rho_\beta^{\rm eff}$, given by
\begin{equation}
    S_\alpha(\beta) \equiv \frac{1}{1-\alpha} \log_2{\text{Tr}((\rho_\beta^{\rm eff})^\alpha)},
\end{equation}
which reproduces \cref{eq:conf_ent_vN} in the limit $\alpha \to 1$. The \emph{Hartley entropy}, $S_0(\beta)$ simply measures the total number of non-zero eigenvalues of $\rho_\beta^{\rm eff}$, which is the same as its total dimension $D = \sum_i D_i$ (assuming that all the $\mu_i$ are full rank, which is true in the Gibbs state with $0 < \beta < \infty)$. This quantity should only increase with $\beta$, as more and more local minima become stable. This is in contrast with the behavior of the von Neumann entropy \cref{eq:conf_ent_vN}, which can be non monotonic due to the fact that while the sheer number of components might increase with lowering temperature, fewer of them might actually contribute significantly to $\rho_G$. In the other extreme, the \emph{min-entropy} $S_\infty(\beta)$ measures the largest eigenvalue of $\rho_\beta^{\rm eff}$, which we roughly take to correspond to the size of the largest Gibbs state component. The R\'enyi entropies satisfy the inequality $S_\alpha(\beta) \geq S_{\alpha'}(\beta)$ if $\alpha \leq \alpha'$, so we can use $S_\infty(\beta)$ to lower bound \cref{eq:conf_ent_vN} for example, which is indeed the approach we will take below in \cref{suppl:Sconf}.

In principle, calculating the entropies requires calculating the eigenvalues $\lambda_{i,\alpha}$ of the matrices $\mu_i$. However, in the examples we consider, which are all commuting stabilizer code Hamiltonians, the rotations within the different blocks are all generated by exact symmetries of the Hamiltonian (the logical operators of the quantum code). As such, they do not change the energy and therefore the density matrices $\mu_i$ are all \emph{maximally mixed} states with eigenvalues $\lambda_{i,\alpha} = 1/D_i$ (we expect the same to be true more generally, even when we perturb away from the stabilizer limit, at least in some approximate sense). Thus, the spectrum of $\rho_\beta^{\rm eff}$ consists of eigenvalues of the form $p_i / D_i$, each with a degeneracy of $D_i$. In particular, the min-entropy is given by $S_\infty(\beta) = -\log_2\max_{i}\{p_i/D_i\}$. 

Finally, we can relate this to the Gibbs weight of extremal components. Consider an extremal Gibbs state component $\rhoGV$, associated to some subspace $\mathcal{V}$. As noted above, this takes the form $\ket{\phi}\bra{\phi}_{i_*} \otimes \sigma_{i_*}$ for some $i_*$ and $D_{i_*}$-dimensional pure state $\ket{\phi}_{i_*}$, which both depend on the choice of $\mathcal{V}$. Consequently, the weight of the extremal component is given by $w_\mathcal{V} = \tr(P_{\mathcal{V}}\rho_G) = p_{i_*} \langle \phi|\mu_{i_*}|\phi\rangle$. Now, making the assumption that $\mu_i = 1\!\!1 / D_i$, we get that $w_\mathcal{V} = p_{i_*}/D_{i_*}$. In other words, the eigenvalues of $\rho_\beta^{\rm eff}$ are precisely the Gibbs weights and the formula for the min-entropy becomes $S_\infty(\beta) = -\log_2\max_{\mathcal{V}}\{w_\mathcal{V}\}$. This is the formula we use to bound the configurational entropy below. 

%
%
%
%

\section{Quantum low density parity check codes with linear confinement\label{suppl:expansion}}

The models that we consider in this work as an instantiation of topological quantum spin glasses are derived directly from quantum error correcting codes. More precisely, they are based on quantum low density parity check codes with a property called linear confinement, or, in its most extreme incarnation, expansion.

In this section, for readers unfamiliar with error correcting codes but also as a way to fix notation, we providing a brief introduction of classical and quantum low density parity check codes.
We then provide a formal definition of linear confinement in classical and quantum error correcting codes, and comment on its relation to the boundary- and coboundary expansion of chain complexes. 
We end this section with a review of two constructions of quantum error correcting codes with linear confinement.

\subsection{Classical and quantum low density parity check codes}

Classical linear codes on $n$ bits are defined via a so called ``parity check matrix'' $H\in \mathbb{F}_2^{m\times n}$, where $\mathbb{F}_2$ is the the field with two elements (for simplicity, we restrict the discussion here to binary codes defined on bits). Each row of $H$ defines a single ``parity check'', which corresponds to a subset of bits (corresponding to the nonzero entries of the row) whose sum is enforced to be zero mod $2$ in the codewords of the code.
its easy to check that this means that the codewords of the code are exactly given by the elements of the kernel of $H$. This will indeed be the formal definition:

\begin{definition}[Classical linear code]
    Given a parity check matrix $H\in \mathbb{F}_2^{m\times n}$ with $\rank H < n$, the classical linear code corresponding to $H$ is defined as the subspace $\mathcal C = \ker H$.\\
    We also define the \emph{number of logical bits} $k$, as the dimension of the code space, that is $k := \dim \mathcal C = n - \rank H$.\\
    We further define the \emph{distance} $d$ of the code as the smallest Hamming weight of any non-zero element of $\mathcal C$.
\end{definition}

By the above definition, codewords are vectors $\vec z \in \mathbb{F}_2^n$ such that $H \vec z = 0$. 
Flipping some bits amounts to adding to $\vec z$ another vector $\vec e\in \mathbb{F}_2^n$ (note that addition in $\mathbb{F}_2$ is mod $2$). 
Because of linearity, $\Hcheck(\vec z + \vec e) = \Hcheck \vec e =: \vec s$, and $\vec s$ is called the \emph{syndrome} of the error $\vec e$.
Because the codewords form a linear subspace, the definition of $d$ as the weight of the smallest codeword means that the distance is also the smallest difference between any two codewords. 
It is thus possible, at least in principle, to recover a corrupted codeword faithfully as long as $\abs{\vec e} < d/2$, by identifying the solution to $H \vec e = \vec s$ with minimal Hamming weight.
The triplet number of bits $n$, the number of logical bits $k$, and the distance $d$ are often denoted as a triplet and we say in short that $\mathcal C$ is an $[n, k, d]$ code.

For practical purposes we often want to enforce the condition that the parity check matrix is sparse.

\begin{definition}[Classical low density parity check (LDPC) code]\label{def:classical_ldpc}
    A family of classical linear codes define by a family parity check matrices $H_n$ is called a \emph{low density parity check (LDPC) code}, if there exist two $n$-independent integers $w_{\rm check}$ and $w_{\rm bit}$, called the check- and bit-degree respectively, such that $\sum_j(H_n)_{ij} \leq \wcheck$ and ${\sum_i(H_n)_{ij} \leq \wbit}$ for all~$n$.
\end{definition}

In other words, each parity check only contains a finite number of bits (at most $w_{\rm check}$) and each bit participates in a finite number of checks (at most $w_{\rm bit}$).

In the following, we will not explicitly denote the $n$-dependence of the parity check matrix $H$, but whenever we talk about LDPC codes implicitly consider an infinite family of codes with bit- and check- degrees independent of $n$.

Fixing some further notation, the ratio $r = k/n$, quantifies the overhead associated to the encoding and is called the \emph{rate} of the code. 
Intuitively, the best codes should maximize the rate (i.e.\ minimize the overhead) as well as the distance $d$ (i.e.\ the robustness against noise). Codes which have the optimal scaling of both parameters, that is $[n, \Theta(n), \Theta(n)]$-codes, are called \emph{good}.

Quantum error correcting codes correspond to a subspace $\mathcal C$ of the $2^n$-dimensional complex Hilbert space of $n$ qubits. 
Here, we will here consider so called Calderbank-Shor-Steane codes \cite{steane1996css, calderbank1996css}, where the subspace is defined as the common $+1$ eigenspace of a set of commuting Pauli operators. 
The Pauli operators are specified by a pair of classical parity check matrices, $\Hcheck_X\in\mathbb{F}_2^{m_X\times n}$ and $\Hcheck_Z\in\mathbb{F}_2^{m_Z\times n}$ and define two different kind of check operators called stabilizers, corresponding to products of Pauli-$X$ and Pauli-$Z$ operators, respectively:
\begin{flalign}
    &&
    S_{X, i} = \prod_{j; (\Hcheck_X)_{ij} = 1} \hat X_j, 
    &&
    S_{Z, i} = \prod_{j; (\Hcheck_Z)_{ij} = 1} \hat Z_j.
    &&
    \label{eq:checks_Hs}
\end{flalign}

The condition that all the $S_{X, i}$ and $S_{Z, i}$ commute then corresponds to the condition that $H_{X}\cdot H_{Z}^T = \vec 0^{m_X \times m_Z}$.
The codespace $\mathcal C$ is then defined as the common $+1$ eigenspace of all $g_{X, i}$ and $g_{Z, i}$, and has dimension $k = n - \rank \Hcheck_X - \rank \Hcheck_Z$.
As before, this leads us naturally to the formal definition

\begin{definition}[Calderbank-Shor-Steane (CSS) code]
    Consider a pair of classical parity check matrices ${\Hcheck_X\in\mathbb{F}_2^{m_X\times n}}$ and $\Hcheck_Z\in\mathbb{F}_2^{m_Z\times n}$ such that $H_{X}\cdot H_{Z}^T = \vec 0^{m_X \times m_Z}$.
    These define two sets of commuting check operators $S_{X, i}$, $S_{Z, i}$ (see~\cref{eq:checks_Hs}), and the corresponding quantum code is defined as the mutual $+1$ eigenstates of all check operators.\\
    We also define the number of logical qubits as $k = \dim \mathcal C = n - \rank \Hcheck_X - \rank\Hcheck_Z$.\\
    We further define the logical operators of the code as those Pauli-strings that commute with all checks. The distance of the code is then defined as the size of the smallest non-trivial logical operator, i.e. the smallest operator that commuted with all checks but is not a product of checks (and hence not the logical identity).
\end{definition}

We also define the LDPC condition for quantum codes

\begin{definition}[Quantum low density parity check (qLDPC) code]
    A family of CSS codes defined by a family of parity check matrices $\Hcheck_{X, n}$, $\Hcheck_{Z, n}$ is called a quantum low density parity check (qLDPC) code, if both $\Hcheck_{X, n}$ and $\Hcheck_{Z, n}$ are LDPC (see \cref{def:classical_ldpc}).
\end{definition}

The above definition are almost analogous to the classical case, apart from the distinction of the code states and logical operators. Since the parity check matrices in the CSS case are used to define operators on the Hilbert space, the codewords of the corresponding classical codes correspond to the logical operators of the quantum code. The condition $H_{X}\cdot H_{Z}^T = \vec 0^{m_X \times m_Z}$ means that these codewords fall into two classes: (1) check operators $g_{X/Z}$, which are representatives of the logical identity (recall that the code space is the mutual $+1$ eigenstate of all checks) and (2) logical operators that act non-trivial on the code space. The logical operators are again divided into two sets: logical $X$-operators are given by $\overline X = \hat X^{\overline{x}}$ where $ \overline{x}\in \ker \Hcheck_Z / \Im \Hcheck_X^T$ while logical $Z$-operators are given by $\overline Z = \hat Z^{\overline{z}}$ where $\overline{z} \in \ker \Hcheck_X / \Im \Hcheck_Z^T$. Here and below, we use the notation $\hat O^{\vec a} = \prod_j O_j^{a_j}$ and use $\Im M$ to denote the image of the matrix $M$.
The distance of the quantum code in terms of the parity check matrices is hence the smallest Hamming weight of any element of the set $(\ker \Hcheck_Z - \Im \Hcheck_X^T) \cup (\ker \Hcheck_X - \Im \Hcheck_Z^T)$.

As in the classical case, $r=k/n$ is called the rate of the code, and a code is called good if both $k$ and $d$ are proportional to $n$.

\subsection{Linear confinement and expansion}

We begin by defining linear confinement for classical codes. In coding theory, the same property is also sometimes called \emph{robustness}.

\begin{definition}[Linear confinement]
\label{def:expansion_classical}
    We say that a classical code with parity check matrix $H\in\mathbb{F}_2^{m\times n}$ has linear ($\delta$, $\gamma$)-confinement if 
\begin{equation}
	\abs{\vec x} \leq \delta(n) \Rightarrow \abs{H \vec x} \geq \gamma \abs{\vec x}
\end{equation}
for some monotonically increasing function $\delta(n)$ which diverges super-logarithmically with $n$ (i.e. $\log n/\delta(n)\to0$ as $n\to\infty$), and some $\gamma >0$.
\end{definition}

Intuitively, linear confinement implies that ``large errors have large syndromes'' --- that is until the error becomes \emph{too} large, which is quantified by the function $\delta(n)$.
Note that the reverse is not true : there may be large errors $\abs{\vec x} > \delta(n)$ with small syndromes, but which are far away from any logical operator. A guarantee of this second kind, ``small syndromes correspond to small errors'' is called \emph{soundness}, and an interesting topic in itself. In this paper however, we focus on the physical implications of confinement. 

\begin{remark}\label{rem:expander_code}
    In the case where the upper bound $\delta(n)$ scales optimally, that is $\delta(n) = \delta\cdot n$ for some $\delta > 0$, our definition of linear confinement coincides with that of (code) \emph{expansion} by Sipser and Spielman \cite{sipser_spielman1996}. Classical codes with this scaling are also called \emph{expander codes}.
\end{remark}

\begin{remark}
    The parameter $\gamma$, which sets the number of checks violated per error, is upper bounded by the bit-degree $d_{\rm bit}$. Families of expander codes that approach this optimal value are called \emph{lossless expanders}.
\end{remark}

For quantum codes CSS, there is an equivalent notion of ``large errors have large syndromes'', however we will have to adjust the measure by which to measure the size of the error. 
Recall that a quantum CSS code is defined by two parity check matrices $\Hcheck_X$ and $\Hcheck_Z$ that fulfill the condition $\Hcheck_X^T\Hcheck_Z = 0$, which guarantees that $X$- and $Z$-checks commute.
This means that when viewed a classical code, neither $H_X$ nor $H_Z$ can be expanding in the sense of \cref{def:expansion_classical} since they have many small codewords corresponding to the stabilizers of the other kind. The stabilizer however act trivially on the codespace, and hence should not count towards the size of the error. This is formalized by defining the ``reduced weight'' of an error.

\begin{definition}[Reduced weight]
\label{def:reduced_weight}
    Given a pair of parity check matrices $H_X\in \mathbb{F}_2^{m\times n}$ and $H_Z \in \mathbb{F}_2^{m\times n}$ with $H_X^T H_Z = 0$, we define the reduced weight with respect to $H_X$ and $H_Z$ as the following norms on $\mathbb{F}_2^n$
    \begin{align}
        \norm{\vec x}_{X} :=& \dist[\vec x, \Im(\Hcheck_X^T)], \\
        \norm{\vec z}_{Z} :=& \dist[\vec z, \Im(\Hcheck_Z^T)]. 
    \end{align}
    Above, the distance of a vector $\vec x \in \mathbb{F}_2^n$ to a subspace $A$ is defined as ${\rm dist(\vec x, A)} := \min_{\vec a \in A}\abs{\vec x + \vec a}$.
\end{definition}

\begin{remark}
   Given Pauli operators $E_X = \hat X^{\vec x}$ and $ E_Z = \hat Z^{\vec z}$, and a code state $\ket{\psi} \in \mathcal C$, then $\norm{\vec x}_X$ and $\norm{\vec z}_z$ are the weights of the smallest Pauli operators with the same action as $E_X$ and $E_Z$ on $\ket{\psi}$, respectively. 
\end{remark}

We are now ready to define linear boundary- and coboundary confinement.

\begin{definition}[Linear boundary and coboundary confinement]
\label{def:expansion}

Consider a quantum CSS code defined by two parity check matrices $H_{\rm X}$ and $H_{\rm Z}$. For $\delta(n)$ some monotonic function of $n$ that diverges super-logarithmically with $n$ (i.e. $\log n/\delta(n)\to0$ as $n\to\infty$), and $\gamma >0$, we say that the code has \emph{($\delta$, $\gamma$)-boundary confinement} if
\begin{equation}
	\norm{\vec x}_{\rm X} \leq \delta(n) \Rightarrow \abs{H_{\rm Z} \vec x} \geq \gamma \norm{\vec x}_{\rm X}.
\end{equation}
We say that the quantum code has \emph{($\delta$, $\gamma$)-coboundary confinement} if
\begin{equation}
	\norm{\vec z}_{\rm Z} \leq \delta(n) \Rightarrow \abs{H_{\rm X} \vec z} \geq \gamma \norm{\vec z}_{\rm Z}.
\end{equation}
We say that a quantum code has \emph{($\delta$, $\gamma$)-confinement} if it has both ($\delta$, $\gamma$)-boundary and ($\delta$, $\gamma$)-coboundary confinement.
\end{definition}

Although, for the ease of readers more familiar with the perspective from coding theory, we have formulated the above definitions in terms of classical and CSS codes, we still borrow the names linear boundary and coboundary confinement from homology.
This is because a classical code can be identified with a two-term chain complex
\begin{equation}
    C_1
    \xrightarrow[\partial_1 = H^T]{} C_0,
\end{equation}
where the basis elements of the $\mathbb{F}_2$-vector spaces $C_0$ and $C_1$ correspond to bits and checks, respectively.
Any CSS quantum code can be identified with a three-term chain complex
\begin{equation}
    C_2
    \xrightarrow[\partial_2 = H_{\rm Z}^T]{} C_1
    \xrightarrow[\partial_1 = H_{\rm X}]{} C_0,
\end{equation}
where the basis elements of the vector spaces $C_0$, $C_1$, and $C_2$ correspond to $X$-checks, qubits, and $Z$-checks, respectively, and the $Z$ and $X$-parity check matrices correspond to the boundary operators. In this language, errors correspond to elements of $C_1$. The $X$($Z$)-syndrome is given by the (co-)boundary of this chain, and the reduced weight is simply the distance of a chain to the set of (co-)boundaries. Linear (co-)boundary confinement with optimal scaling of $\delta(n) = \delta\,n$ for some $\delta >0$ is therefore equivalent to (co-)boundary expansion of the corresponding chain complex \cite{gromov2010singularities, linial2006homological, gotlib_kaufman_2023}.
See \cite[Section~II.B]{qldpc_review} for more details on this connection between (quantum) codes and homological algebra.

\subsection{Two examples of quantum codes with linear confinement}

In the main text, we mention two explicit examples of codes with linear confinement. Here, we provide their formal definition as well as a summary of their properties that are relevant for this work. 
For a more detailed overview, we refer the interested reader to Ref. \onlinecite{qldpc_review}.

\subsubsection{Hypergraph Products of Gallager Codes\label{suppl:hgp}}

Our first example will be constructed by taking a certain homological product of the historically first example of good LDPC codes, called Gallager codes. 
Gallager introduced the idea of low-density parity check codes in his PhD thesis \cite{gallager1960low,gallager1962low}. 
He also provided the first example of a family of \emph{good} LDPC codes, by considering random codes.

In particular, the (n, \wbit, \wcheck)-ensemble is defined by considering the following ensemble of $\Hcheck\in \mathbb{F}_2^{m\times n}$ matrices with $m = \tfrac{\wbit}{\wcheck}n$. Partition $\Hcheck$ vertically into $\wbit$ equal-size blocks of size $\tfrac{n}{\wcheck}\times n$. Each block will have one nonzero entry per column. The first block contains all its ones in descending order such that each row contains $\wcheck$ ones. The remaining blocks are obtained from the first by random permutations of the columns, where each permutations is considered with equal probability.

Note that for $\wcheck > \wbit$ codes from the ensemble have linear rate since $k = n - \rank \Hcheck \geq n - m = n(1-\tfrac{\wbit}{\wcheck})$.
In fact, it is known that this bound is tight:

\begin{lemma}[Gallager Codes have no redundancies, Lemma 3.27 in \cite{richardson2008modern}]\label{lem:gallager_no_redundancies}
    Consider the $(n, \wbit, \wcheck)$ ensemble of classical LDPC codes, and let
    \begin{equation}
        k_{\rm des} = 
        \begin{cases}
            n\,\left( 1 - \tfrac{\wbit}{\wcheck} \right) - 1 & \text{if}~\wbit~\text{even} \\
            n\,\left( 1 - \tfrac{\wbit}{\wcheck} \right) \phantom{-1} & \text{if}~\wbit~\text{odd} \\
        \end{cases}.
    \end{equation}
    
    Then, for $\wcheck > \wbit \geq 2$, we have for the actual number of logical bits $k := n-\rank H$
    \begin{equation}
        {\rm Prob}(k = k_{\rm des}) \xrightarrow[n\to\infty]{} 1
    \end{equation}
\end{lemma}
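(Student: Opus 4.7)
The plan is to establish the claim by determining the dimension of the left null space of $H$ with high probability: I will show it equals exactly $1$ when $\wbit$ is even and $0$ when $\wbit$ is odd. Since $\rank H = m - \dim(\ker H^T)$ and $k = n - \rank H$, together with $m = (\wbit/\wcheck)\,n$ this pins $k$ down to the claimed value of $k_{\rm des}$, the shift by one in the even case arising from a single forced redundancy among the parity checks. To identify that redundancy, observe that every column of $H$ contains exactly $\wbit$ ones by construction, so summing all $m$ rows over $\mathbb F_2$ yields $\wbit\,\vec 1 \pmod 2$. When $\wbit$ is even this equals $\vec 0$, so the all-ones vector $\vec 1 \in \mathbb F_2^m$ lies in $\ker H^T$ deterministically; when $\wbit$ is odd no analogous deterministic dependency exists. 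It then remains to show that with high probability no further dependencies arise.

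For that I would apply a first-moment argument. Let $\mathcal D$ denote the collection of row-subsets $S \subseteq [m]$ with $\bigoplus_{s\in S} H_s = \vec 0$, excluding $S = \emptyset$ and (when $\wbit$ is even) $S = [m]$. By Markov's inequality it suffices to prove $\mathbb E\,|\mathcal D| \to 0$. Exploiting the block structure of $H$, write $S_i := S \cap B_i$ for the intersection of $S$ with the $i$-th horizontal block $B_i$. Because the rows inside a single block have pairwise disjoint supports, $\bigoplus_{s\in S_i} H_s = \mathds 1_{C_i}$ where $C_i \subseteq [n]$ is the set of $|S_i|\wcheck$ columns covered by the rows in $S_i$. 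The dependency condition thus reduces to the purely combinatorial requirement that every column $j \in [n]$ lies in an even number of the sets $\{C_i\}_{i=1}^{\wbit}$.

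The main obstacle is to bound $p(S) := \Pr[\bigoplus_{i=1}^{\wbit} \mathds 1_{C_i} = \vec 0]$ sharply enough that $\sum_{S\in \mathcal D} p(S) = o(1)$ after summing over the exponentially many candidate subsets. Here $C_1$ is fixed by $S_1$ (the first block is deterministic), while for $i\geq 2$ the set $C_i$ is the union of $|S_i|$ uniformly random parts of a uniformly random partition of $[n]$ into $n/\wcheck$ size-$\wcheck$ cells. Following the generating-function analysis of Gallager reproduced in Ref.~\cite{richardson2008modern}, I would express the number of favorable permutations as a coefficient of a polynomial built from $\bigl[(1+x)^{\wcheck} + (1-x)^{\wcheck}\bigr]^{n/\wcheck}$, extract leading asymptotics after fixing the block profile $(s_1,\dotsc,s_{\wbit})$, and multiply by the $\prod_i \binom{n/\wcheck}{s_i}$ choices of $S$ consistent with that profile. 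The hypothesis $\wcheck > \wbit \geq 2$ forces geometric decay of the relevant coefficients, so the only profiles producing $p(S)$ that does not decay exponentially in $n$ correspond precisely to the two trivial subsets already excluded. Summing the remaining contributions and applying Markov then completes the argument.
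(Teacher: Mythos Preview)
The paper does not supply its own proof of this lemma; it is quoted verbatim as Lemma~3.27 of Ref.~\cite{richardson2008modern} and used as a black box. Your proposal follows the standard argument from that reference: identify the deterministic all-ones left-null vector in the even-$\wbit$ case, then run a first-moment (Markov) bound on the number of nontrivial row dependencies using the generating function $\bigl[(1+x)^{\wcheck}+(1-x)^{\wcheck}\bigr]^{n/\wcheck}$ to count column-parity-consistent configurations. So there is nothing to compare against in the present paper, and your outline is consistent with the cited source.
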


The above Lemma in particular means that a parity check matrix chosen from the LDPC ensemble has no redundancies with high probability for sufficiently large $n$.

Sipser and Spielman later showed that Gallager codes are expander codes (see \cref{rem:expander_code}) with high probability \cite{sipser_spielman1996} and one can further show that they achieve confinement with arbitrarily large coefficient $\gamma$, that is $\gamma > \wbit -2$ with high probability (see \cite[Theorem~8.7]{richardson2008modern} and \cite[Lemma~11.3.4]{guruswami2019essential}).

Tillich and Zemor \cite{tillich2009hgp} considered the so-called \emph{hypergraph product} of a Gallager code with itself, which the quantum code defined by the parity check matrices
\begin{subequations}\label{eq:hgp_def}
\begin{align}
    \Hcheck_X &= \left(H \otimes \id_n, \hspace{3mm} \id_m \otimes H^T \right) \\
    \Hcheck_Z &= \left(\id_n \otimes H, \hspace{3mm} H^T \otimes \id_m \right)
\end{align}
\end{subequations}
where $H$ is the parity check matrix of the classical Gallager code. They showed that this provides a quantum code with linear rate $k = \Theta(n),$ and $d = \Omega(\sqrt n)$ \cite{tillich2009hgp}. This construction was the first to achieve linear rate and super-logarithmic distance.

In the following, we refer to the hypergraph product of a Gallager codes with itself simply as ``hypergraph product code'', omitting the classical input code for brevity. For our purposes their most important property is the fact that they have linear confinement, which was first shown in Ref.~\onlinecite{leverrier2015quantum_expander}. Here, we quote a revised version from Ref.~\onlinecite{Fawzi2018constant}, which provides an improved lower bound on the value of the coefficient $\gamma$.

\begin{lemma}[Lemma 15 in \cite{Fawzi2018constant}]\label{lem:hgp_confinement}
    Let $\epsilon$ be a positive constant. For $\wbit, \wcheck > \epsilon^{-1}$, the hypergraph product of a classical Gallager code chosen from the $(n, \wbit, \wcheck)$ ensemble with itself has, with high probability, linear $(\delta, \gamma)$-confinement in the sense of \cref{def:expansion}, where $\delta(n) = \delta\cdot\sqrt n$ with $\delta = \wbit(\wbit^2 + \wcheck^2)^{-1}\epsilon$ and $\gamma = \tfrac{1}{2}(1 - 8\epsilon)\wbit$.
\end{lemma}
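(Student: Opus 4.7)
The plan is to lift the classical confinement of Gallager codes to quantum confinement of their hypergraph product by exploiting the tensor product structure. By the symmetry of the construction under swapping $H_X \leftrightarrow H_Z$ together with the two qubit blocks, it suffices to establish boundary confinement; coboundary confinement follows from an identical argument. I would begin by invoking the classical statement, building on Sipser--Spielman \cite{sipser_spielman1996} and refined in \cite[Theorem~8.7]{richardson2008modern} and \cite[Lemma~11.3.4]{guruswami2019essential}: with high probability, a Gallager code from the $(n, \wbit, \wcheck)$ ensemble has classical $(\delta_c n, \gamma_c)$-confinement with $\gamma_c > \wbit - 2$ and a positive threshold $\delta_c$ depending on the ensemble parameters.

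The next step is a matrix-valued representation of errors. Since the hypergraph product has $N = n^2 + m^2$ qubits partitioned into two rectangular blocks, any Pauli-$X$-type error corresponds to a pair $(A,B) \in \mathbb{F}_2^{n\times n} \times \mathbb{F}_2^{m\times m}$. A direct computation from the Kronecker identity $(X \otimes Y)\,\mathrm{vec}(M) = \mathrm{vec}(Y M X^T)$ identifies the $Z$-syndrome with the matrix $HA + BH \in \mathbb{F}_2^{m \times n}$, and the $X$-stabilizer action with $(A,B) \mapsto (A + H^T C,\, B + C H^T)$ for arbitrary $C \in \mathbb{F}_2^{m \times n}$. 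Using this gauge freedom, I would reduce each error to a representative in which every column of $A$ (respectively, every row of $B$) attains minimum Hamming weight in its coset modulo $\Im(H^T)$, so that the Hamming weight $|A| + |B|$ of the reduced pair matches the reduced error weight $\|(\vec a, \vec b)\|_X$ up to cross-terms coupling $A$ and $B$ via $C$, which can be tracked separately.

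Next I would bound the syndrome column-by-column. For each column $j$ of the reduced $A$ with $|A_{\cdot,j}| \leq \delta_c n$, classical confinement of $H$ gives $|H A_{\cdot,j}| \geq \gamma_c |A_{\cdot,j}|$, and a symmetric statement holds for rows of $B$. Summing over columns and rows yields $|HA| + |BH| \geq \gamma_c(|A| + |B|)$. The central obstacle is that $HA$ and $BH$ live in the same ambient space, so cancellations in $HA + BH$ must be controlled. Here I would exploit the sparsity of $H$: each syndrome position $(i,j)$ receives contributions from $HA$ only through column $j$ of $A$ restricted to the $\leq \wcheck$ ones in row $i$ of $H$, and from $BH$ only through row $i$ of $B$ restricted similarly. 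A combinatorial counting argument then bounds the overlap $|HA \cap BH|$ by a small multiple of $\epsilon \wbit (|A| + |B|)$, producing the effective rate $\gamma \geq \tfrac{1}{2}(1 - 8\epsilon)\wbit$.

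The principal technical obstacle is precisely this cancellation bound: naively $|HA + BH|$ could be much smaller than $|HA| + |BH|$, and one must show that the gauge-fixed columns of $A$ and rows of $B$ do not produce systematically aligned syndrome supports. Following the strategy of \cite{leverrier2015quantum_expander}, I expect to handle this by restricting to errors for which every row and column of the reduced $(A,B)$ lies within the classical confinement range $\delta_c n$; this range, translated through the relation $N = n^2(1 + (\wbit/\wcheck)^2)$, gives the stated scale $\delta(n) = \delta\sqrt{n}$ with $\delta = \wbit(\wbit^2 + \wcheck^2)^{-1}\epsilon$, where the factor $\epsilon$ tracks how much of the classical confinement budget must be reserved to ensure cancellations remain subleading.
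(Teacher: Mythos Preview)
The paper does not prove this lemma at all: it is quoted verbatim as Lemma~15 of Ref.~\cite{Fawzi2018constant} (building on \cite{leverrier2015quantum_expander}) and used as a black box. There is therefore no ``paper's own proof'' to compare against; your sketch is an attempt to reconstruct the argument of the cited reference rather than anything in this paper.

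That said, your outline is broadly faithful to the Leverrier--Tillich--Z\'emor / Fawzi--Grospellier--Leverrier strategy: represent an $X$-error as a matrix pair $(A,B)$, identify the $Z$-syndrome with $HA + BH$ and the stabilizer gauge with column/row shifts by $\Im H^T$, reduce each column of $A$ and row of $B$ to minimum coset weight, and then invoke classical confinement columnwise/rowwise. The genuine technical content, as you correctly flag, is controlling the cancellation $|HA| + |BH| - |HA+BH|$. Your description of how this is done is vague (``a combinatorial counting argument''), and in the actual proof this is not a simple sparsity count on $H$: one needs the stronger \emph{unique-neighbor} expansion of the Tanner graph (confinement with $\gamma_c > \wbit/2$, not merely $\gamma_c > 0$) to argue that most syndrome bits produced by $A$ cannot be cancelled by $B$ because they correspond to checks touched by a single error bit. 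Your sketch does not invoke unique neighbors, and without them the cancellation bound would fail. If you were to write this proof out, that is the step requiring real care; everything else is bookkeeping.
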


In particular, this means that while the scaling of the upper bound $\delta(n)$ is sub-optimal, hypergraph products of Gallager codes realize arbitrarily large coefficients $\gamma$.

\subsubsection{Balanced Products of Tanner codes on Symmetric Expanders}

The upper bound  on the distance $d \leq \Omega(\sqrt{n})$ of hypergraph product codes is due to the fact that code words of the classical input codes lift to logical operators of the quantum code, whereas the number of qubits $n$ of the quantum code scales quadratically in the number of bits of the input code.

This problem can be overcome using a different type of product of classical codes called \emph{balanced product}.
Assuming that we have a group $G$ acting on the parity check matrix $\Hcheck$ then, instead of just performing the usual tensor product, we can factor out the action of $G$ on both factors.
More generally, for vector spaces $V$ and $W$ with a group $G$ acting on both, we define the product
$$V\otimes_G W := V\otimes W/\langle v\cdot g\otimes w-v\otimes g\cdot w\mid v\in V,w\in W, g\in G \rangle.$$
We refer to \cite{breuckmann2021balanced,qldpc_review} regarding the details of the construction.
However, it should be clear that we can not use Gallager codes as inputs, as they are sampled from a random ensemble and will not have sufficiently many symmetries. 
We therefore require a highly symmetric construction of good LDPC codes that we can use as inputs.
Such a construction was developed by Sipser--Spielman \cite{sipser_spielman1996} who showed, based on earlier ideas by Tanner \cite{tanner1981recursive}, that one can obtain good codes not only from random graphs, but also from graphs which have certain expansion properties.

Let us describe how to obtain suitable input codes. Consider a group~$G$ with two symmetric generating sets $A,B \subset G$ (meaning $A = A^{-1}$ and $B = B^{-1}$) of equal cardinality $\Delta = |A| = |B|$. From these, we construct a right Cayley graph $\operatorname{Cay}^r(G,A)$ and a left Cayley graph $\operatorname{Cay}^\ell(G,B)$.
We then consider their respective double-covers, which we denote by $X_2^r = \operatorname{Cay}^r_2(G,A)$ and $X_2^\ell = \operatorname{Cay}^\ell_2(G,B)$. When these double-covers exhibit strong spectral expansion properties, we can construct Sipser-Spielman expander codes~\cite{sipser_spielman1996} using local codes~$K$ and~$L$. 
The resulting codes are denoted by $C(X_2^r,L)$ and $C(X_2^\ell,K)$.
We will require that the Cayley graphs are Ramanujan graphs, which means the spectrum of the adjacency matrix satisfies $\lambda = \max \{ |\lambda_2|, |\lambda_n| \} \leq 2 \sqrt{\Delta - 1}$. Such graphs can be constructed following~\cite{margulis1988expanders,lps1988expanders}.
These codes were constructed in \cite{breuckmann2021balanced} and conjecture to be \emph{good} qLDPC codes , i.e.\ with $k\sim n$ and $d\sim n$, which was later proven in~\cite{panteleev2022qldpc}. 
In~\cite{dinur2023qldpc} proved a slightly stronger statement, namely that balanced products of Sipser-Spielman codes are (co-)boundary expanders, which we will use here.

Consider balanced product code $$C(X_2^r,K) \otimes_G C(X_2^\ell,L) = C(X_2^r\times_G X_2^\ell, K\otimes L).$$ 
The right-hand side of this equation reflects that we can assign a local tensor code $K\otimes L$ directly to the balanced product complex $X_2^r\times_G X_2^\ell$, as detailed in \cite[Section~IV-B]{breuckmann2021balanced}.

The following theorem, which combines Theorem~3.8 and Corollary~3.10 from Ref.~\cite{dinur2023qldpc}, characterizes the expansion properties of the balanced product code.

\begin{theorem}\label{thm:CCexpansion}
    The parity-check matrices of the balanced product code $$C(X_2^r,L)\otimes_G C(X_2^\ell,K) = C(X_2^r\times_G X_2^\ell, K\otimes L)$$ have linear $(\delta,\gamma)$-confinement with constant $\delta >0$.
\end{theorem}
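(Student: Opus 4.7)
The plan is to reduce the theorem to a small-support coboundary/boundary expansion statement for the three-term chain complex
$$ C_2 \xrightarrow{\partial_2 = H_Z^T} C_1 \xrightarrow{\partial_1 = H_X} C_0 $$
of the balanced product $X_2^r \times_G X_2^\ell$ equipped with the local tensor code $K \otimes L$, and then to invoke the quantitative local-to-global expansion result of Ref.~\cite{dinur2023qldpc}. By the symmetry of the construction under simultaneously swapping the two Cayley double-covers and the two local codes, it suffices to establish coboundary confinement: for any $\vec z \in C_1$ with $\|\vec z\|_Z \leq \delta n$, show $|H_X \vec z| \geq \gamma \|\vec z\|_Z$. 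Replacing $\vec z$ by the weight-minimal representative of its coset $\vec z + \operatorname{im} H_Z^T$ preserves the syndrome $H_X \vec z$ and reduces the task to proving $|H_X \vec z| \geq \gamma |\vec z|$ for a minimal $\vec z$ of weight at most $\delta n$.

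The technical core is a local-to-global argument on the balanced product complex. The basis of $C_1$ is partitioned into $|G|$ constant-size "local views" indexed by group elements, in each of which the $X$-checks restrict to those of the tensor code $K \otimes L$. For a weight-minimal $\vec z$ one analyses, view by view, the restriction $\vec z|_g$: by robustness of the tensor product of classical codes (inherited from the Sipser--Spielman local codes $K$ and $L$), either $\vec z|_g$ is zero, or it is itself a local coboundary (which would contradict minimality of $\vec z$, since subtracting the local boundary strictly decreases its weight while preserving the coset mod $\operatorname{im} H_Z^T$), or it produces a number of $X$-check violations proportional to its local weight. The Ramanujan spectral bound $\lambda \leq 2\sqrt{\Delta - 1}$ on the underlying Cayley graphs then prevents the "coboundary-like" local patterns from gluing together across views without incurring overall support of size $\Omega(n)$, so summing the surviving local contributions yields $|H_X \vec z| \geq \gamma |\vec z|$, with constants $\delta, \gamma > 0$ depending only on $\Delta$, on the spectral gap $\Delta - \lambda$, and on the minimum distance of $K \otimes L$.

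The symmetric calculation with the roles of $X$ and $Z$, and of $K$ and $L$, interchanged gives boundary confinement, and one takes $\delta$ and $\gamma$ to be the minimum of the two sets of constants obtained. The main obstacle is the middle step: converting the combination of Cayley-graph spectral expansion and local tensor-code robustness into a globally quantitative small-set expansion bound is the nontrivial content of Theorem~3.8 and Corollary~3.10 of Ref.~\cite{dinur2023qldpc}, and the argument cannot be pushed past the $\delta n$ threshold without significantly stronger structural assumptions on $K$, $L$, and the underlying graphs. Fortunately, linear $(\delta, \gamma)$-confinement in the sense of \cref{def:expansion} only demands control in the small-support regime, which is precisely the regime that the Dinur et al.\ argument delivers, so quoting their theorem in the form stated above closes the proof.
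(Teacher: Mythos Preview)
Your proposal is essentially correct and aligns with the paper, but it is worth noting that the paper does not actually \emph{prove} this theorem at all: it is stated as a quoted result, explicitly attributed to the combination of Theorem~3.8 and Corollary~3.10 of Ref.~\cite{dinur2023qldpc}, with no accompanying proof environment. Your write-up goes further than the paper by sketching the architecture of the Dinur et al.\ argument (reduction to a minimal-weight representative, local views indexed by $G$, robustness of the local tensor code, spectral expansion of the Cayley graphs preventing large glued-together coboundary-like patterns) before ultimately invoking the same citation. So in terms of logical content the two agree; you have simply added exposition around the black box.

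Two minor cautions about the sketch itself. First, the notion you need is \emph{local} minimality rather than global minimality of $\vec z$: the Dinur et al.\ argument works with representatives that cannot be shortened by any \emph{single} local move, and global weight-minimality implies but is not equivalent to this. Second, ``robustness of the tensor product of classical codes inherited from the Sipser--Spielman local codes $K$ and $L$'' understates the difficulty: tensor-code robustness (agreement testability) is not automatic from $K$ and $L$ being good classical codes and is in fact one of the nontrivial ingredients that the good-qLDPC constructions had to supply separately. Neither point is fatal since you defer the technical work to the cited reference anyway, but the phrasing slightly misrepresents where the hard step lies.
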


\section{Typical low-temperature states are surrounded by bottlenecks\label{suppl:ergo_breaking}}

In this section, we prove that the Gibbs state in Hamiltonians derived from quantum codes with linear confinement can generally be decomposed into components supported on subspaces which are surrounded by a bottleneck in the sense of \cref{thm:bottleneck}.
In particular, we show that we can define a subspace $\mathcal V$ around any \emph{typical} (with respect to the Gibbs distribution) eigenstate of the Hamiltonian which is surrounded by a bottleneck subspace $\partial \mathcal V$.

\subsection{Setup of the proof and notation\label{sec:suppl_ergo_breaking_setup}}

Consider an eigenstate $\ket{\vec x_0,\vec z_0}$ of a CSS stabilizer code Hamiltonian with linear confinement, labeled by the supports of the Pauli operators used to produce it from one of the ground states (that is $\ket{\vec x_0,\vec z_0} = \hat X^{\vec x_0} \hat Z^{\vec z_0} \ket{\psi}$ for an arbitrary reference ground state $\ket{\psi}$, see also main text). The idea is to choose the subspace $\mathcal V$ in a way such that in includes typical thermal fluctuations around $\ket{\vec x_0, \vec z_0}$. That is, we are interested in eigenstates $\ket{\vec x_0 + \vec x, \vec z_0 + \vec z}$, where the ``fluctuations'' $\vec x, \vec z$ are in some sense typical according to the Gibbs distribution. 

In graph-local models (i.e.\ models with bounded-degree interaction graphs), percolation theory tells us that typical fluctuations at any finite temperature have extensive operator weight, but at low temperature their support will have no large connected components. 
Here, we will demand an even stronger condition, that is that fluctuation should have no large connected components, and that the connected components that are present should be sufficiently far apart. This is captured by the notion of an $\alpha$-subset:

\begin{figure}
    \centering
    \includegraphics{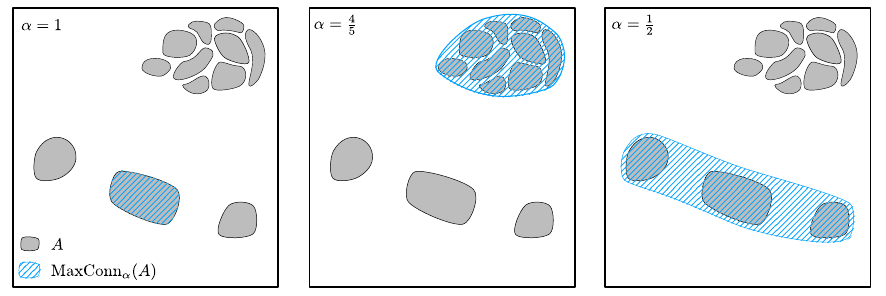}
    \caption{The maximal connected $\alpha$-subset, $\maxConn_{\alpha}(A)$ of a set $A$ (gray) indicated or decreasing values of $\alpha$ in hatched blue. 
    For $\alpha=1$, the notion of an $\alpha$-subset is identical with that of a usual subset, and the maximal connected $\alpha$ subset is just the largest connected component of $A$.
    As $\alpha$ is decreasing, we have increasingly more freedom to ``join'' connected components of $A$ through empty space, while maintaining a large intersection.}
    \label{fig:alpha_subset}
\end{figure}

\begin{definition}[$\alpha$-subset]
	Given a graph $G = (V, E)$, a subset of vertices $A\subset V$ and $\alpha > 0$, we call $B\subset V$ an $\alpha$-subset of~$A$ if $\abs{A\cap B} \geq \alpha \abs{B}$.
    \label{def:alpha_subset}
\end{definition}

\begin{remark}
    For $\alpha=1$ the definition coincides with that of a subset $B \subset A$.
\end{remark}

Of particular importance below will be the idea of the largest connected $\alpha$-subset, $\maxConn_{\alpha}(A)$. 
We illustrate this idea in \cref{fig:alpha_subset}.
While for $\alpha=1$, $\maxConn_{\alpha}(A)$ is just the largest connected component of $A$ (left panel), for $\alpha<1$ it is the largest connected superset of $A$ with relative overlap at least $\alpha$. 
This can be seen as giving freedom to ``join'' different connected components, as illustrated in the center and right panel. 
This means in particular that any set \emph{without} any large $\alpha$-subset must consist of connected components that are small \emph{and} far apart from each other.
We will make this last idea quantitative in \cref{lem:maxConn_alpha_maxConn} below, but for now proceed by using it to define our bottleneck set.

To apply the notion of $\alpha$-subset to our problem, we will abuse notation slightly and identify the binary vectors $\vec x$ with their support, which is also the support of the Pauli operators $\hat{X}^{\vec x}$ labeled by $\vec x$. The support of $\vec x$ is therefore a set of qubits and we can then talk about the connected components of $\vec x$, where connectivity is defined by the interaction graph of the code Hamiltonian, i.e. two qubits are connected if they are part of the same check. One reason why connected components of $\vec x$ (and $\vec z$) are important is that, by definition, different connected components have independent syndromes that cannot cancel each other. This means, among other things, that the expansion properties in \cref{def:expansion} apply to each connected component separately. 

With this preparation, we define our subspace $\mathcal V(\vec x_0, \vec z_0)$ as
\begin{subequations}\label{eq:gibbs_subspaces}
\begin{align}
    \mathcal V(\vec x_0, \vec z_0) :=& \linspan \left\{ 
        X^{\vec x_0 + \vec x} Z^{\vec z_0 + \vec z} \ket{\psi} ~\vert~
        \vec x_0 + \vec x \in \Omega(\vec x_0)
        ~\text{and}~
        \vec z_0 + \vec z \in \Omega(\vec z_0)
    \right\}, \\
    \Omega(\vec x_0) :=& \left\{ 
        \vec x_0 + \vec x ~\vert~ \maxConn_{\alpha = 1/2}(\vec x) \leq \tfrac{1}{2}\delta(n)
    \right\},
\end{align}
\end{subequations}
where $\delta(n)$ is the upper bound on the size of confined errors in the code (see \cref{def:expansion}), and $\maxConn_{\alpha}(\vec x)$ denotes the size of the largest connected $\alpha$-subset of the support of $\vec x$. We define the subspace in terms of a set of of classical binary vectors $\Omega$, since, as we shall see below, to establish the desired bottleneck condition for the space $\mathcal V$ and its boundary $\partial \mathcal V$ it will be sufficient to show an equivalent condition on the subset $\Omega$ and its boundary, with respect to a suitably defined \emph{classical} Gibbs measure.

We can now consider the neighborhood of $\mathcal{V}(\vec x_0,\vec z_0)$ as defined in \cref{def:hilbert_neighborhood}. In particular, we consider a boundary of width $2r = \tfrac{1}{4}\delta(n)$ (note that it is this $2r$-boundary that appears in \cref{thm:bottleneck}), which is defined by
\begin{subequations}\label{eq:gibbs_subspaces_boundary}
\begin{align}
    \partial \mathcal V(\vec x_0, \vec z_0) :=& \linspan\left\{ 
        X^{\vec x_0 + \vec x} Z^{\vec z_0 + \vec z} \ket{\psi} ~\vert~
        \vec x_0 + \vec x \in \partial\Omega(\vec x_0)
        ~\text{and}~
        \vec z_0 + \vec z \in \partial\Omega(\vec z_0)
    \right\}\\
    \partial \Omega(\vec x_0) :=& \left\{
        \vec x \notin \Omega(\vec x_0) \mid {\rm dist}(\vec x, \Omega(\vec x_0)) <\tfrac{1}{4} \delta(n)
    \right\}
\end{align}
\end{subequations}

In the following, we will sometimes drop the arguments and refer to the two subspaces simply as $\mathcal{V}$ and $\partial\mathcal{V}$.  Note that from the way we defined these subspaces, they are well-separated from their image under any logical operation; that is, if we define the completion $\overline{\mathcal V} := \mathcal V \cup \partial \mathcal V$ we have $\overline{\mathcal V} \cap \hat L\overline{\mathcal V} = \emptyset$ for all logical operators $\hat L$ of the code. 
This follows directly from linear confinement, which implies $\maxConn(\hat L) > \delta(n)$, and \cref{lem:maxConn_alpha_maxConn} below. 

Let us give some intuition for the definitions \autoref{eq:gibbs_subspaces} and \autoref{eq:gibbs_subspaces_boundary} above. As already mentioned, the central idea that given a state $\ket{\vec x_0, \vec z_0}$, the space $\mathcal V(\vec x_0, \vec z_0)$ should already include typical thermal fluctuations $\vec x$, $\vec z$ around it, such that the bottleneck $\partial V(\vec x_0, \vec z_0)$ only has small weight relative to $\mathcal V$.
As will show below (\cref{lem:maxConn_alpha_maxConn}), the way we defined $\mathcal V$ and its boundary $\partial \mathcal V$ guarantees that no fluctuation in either $\mathcal V$ or $\partial \mathcal V$ has a large connected component. Since each connected component has an independent syndrome, this will allow us to use linear confinement to argue that all fluctuations in $\partial \mathcal V$ have high energy cost relative to those in $\mathcal V$. In particular, each ``large'' connected $\alpha$-component (those not allowed in $\mathcal{V}$ but in $\partial \mathcal V$) comes with an energy that is proportional to its size. Since the entropic contribution to the free energy is also at most extensive in the size, this will allow us to prove that the bottleneck ratio defined in \cref{thm:botteleneck2} is exponentially small in $n$ at low temperature.

The global Gibbs state is given by
\begin{equation}
    \rhoG = Z^{-1} \sum_{\vec x, \vec z} e^{-\beta (\abs{H_Z \vec x} + \abs{H_X \vec z})} \ketbra{\vec x, \vec z}{\vec x, \vec z}
\end{equation}
where the sum goes over all binary vectors $\vec x$, $\vec z$, and we have chosen an arbitrary ground state as a reference. Note that in this notation, two states $\ket{\vec x, \vec z}$ and $\ket{\vec x', \vec z'}$ are identical iff $\vec x + \vec x' \in \Im(H_{X}^T)$ and $\vec z + \vec z' \in \Im(H_{Z}^T)$. This means that in the sum above, every eigenstate appears $2^{m}$ times. However, since this is an overall factor, it cancels out with the normalization $Z^{-1}$.

We can now write the bottleneck ratio defined in \cref{thm:bottleneck} as 
\begin{subequations}\label{eq:bottleneck_ratio_classical}
\begin{align}
    \Delta(n) := \frac{\tr P_{\partial \mathcal V}\rhoG}{\tr P_{\mathcal V}\rhoG}
    &= \frac{
    \sum_{\vec x \in \partial\Omega(\vec x_0), \vec z \in \partial\Omega(\vec z_0)} 
    e^{-\beta (\abs{H_Z \vec x} + \abs{H_X \vec z})}
    }{
    \sum_{\vec x \in \Omega(\vec x_0), \vec z \in \Omega(\vec z_0)}
    e^{-\beta (\abs{H_Z \vec x} + \abs{H_X \vec z})}
    } \\
    &= \frac{
    \sum_{\vec x \in \partial\Omega(\vec x_0)} 
    e^{-\beta \abs{H_Z \vec x}}
    }{
    \sum_{\vec x \in \Omega(\vec x_0)} 
    e^{-\beta \abs{H_Z \vec x}}
    }
    \times
    \frac{
    \sum_{\vec z \in \partial\Omega(\vec z_0)} 
    e^{-\beta \abs{H_X \vec z}}
    }{
    \sum_{\vec z \in \Omega(\vec z_0)} 
    e^{-\beta \abs{H_X \vec z}}
    } \\
    &= \frac{\muG^{(X)}[\partial\Omega(\vec x_0)]}{\muG^{(X)}[\Omega(\vec x_0)]}
    \times
    \frac{\muG^{(Z)}[\partial\Omega(\vec z_0)]}{\muG^{(Z)}[\Omega(\vec z_0)]},
\end{align}
\end{subequations}
where 
\begin{equation}
    \muG^{(X/Z)}(\vec y) \propto e^{-\beta \abs{H_{Z/X} \vec y}}
\end{equation}
is the global Gibbs measure of the \emph{classical} model defined by $\Hcheck_{X/Z}$.
It hence suffices to show a bottleneck condition for the set of classical configurations $\Omega$ and its boundary $\partial \Omega$, for both  $\muG^{(X)}$ and $\muG^{(Z)}$.
In the classical model, configurations related by adding checks are distinct, but have the same energy. To show the bottleneck, we will have to use that the classical energy functional $E(\vec y) = \abs{\Hcheck_{X/Z} \vec y}$ has linear (co-)boundary confinement (\cref{def:expansion}), which is defined with respect to the the reduced weight (\cref{def:reduced_weight}).
We now introduce the notation $\vec x\reduced$ to denote the smallest representative of the equivalence class $[\vec x] \in \mathbb{F}_2^n / \Im(H_{X}^T)$. In this notation, the reduced weight is simply $\norm{\vec x} = \abs{\vec x\reduced}$ where $\abs{\bullet}$ is the Hamming weight.
The sets $\Omega$ and $\partial \Omega$ then have to be defined with respect to the reduced representative $\vec x\reduced$. That is, we (re-)define $\maxConn_{\alpha}(\vec x) \to \maxConn_{\alpha}(\vec x\reduced)$, and whenever we say that a vector $\vec x$ is \emph{connected} we mean that the support $\vec x\reduced$ is connected in the sense defined above.

In the following, we also will drop the superscript in $\muG$ and the subscript in the parity check matrix $H$, since the proof for both the $X$ and the $Z$ case proceeds completely in parallel.

\subsection{$\alpha$-percolation}

To bound the bottleneck ratio, we will use the notion of $\alpha$-percolation. To the best of our knowledge, this was first considered in Bombin's proof of single-shot error correction for subsystem codes \cite[Lemma~10]{bombin2015single_shot}, but under a different name. 
\cref{thm:alpha_percolation} below in this form is adapted from Ref.\ \cite[Theorem~17]{Fawzi2017efficient}, which proved single-shot error correction for hypergraph product codes of classical expander codes.

Percolation theory asks about the size of the largest connected component in a randomly chosen subgraph of a graph $G$. We will here focus on \emph{site percolation}, where we choose  the subgraph as induced by a random subset of vertices. 
We call a subset of vertices connected if its induced subgraph is connected. 
Site percolation hence studies the size of the largest connected subset of a randomly chosen set of vertices of $G$
The idea of $\alpha$-percolation is to study instead the size of the largest connected $\alpha$-subset of a randomly chosen set of vertices of $G$.
In the following, we denote by $\maxConn(A)$ the size of the largest connected subset, and by $\maxConn_\alpha(A)$ the size of the largest connected $\alpha$-subset of $A \subset V$.

\begin{theorem}[$\alpha$-percolation, Theorem 17 in \cite{Fawzi2017efficient}, i.i.d.\ version]

	Let $G = (V, E)$ be a graph with degree upper bounded by $w$, and let $\alpha \in (0, 1]$ and $t\geq 1$ an integer.
	Then, for a random subset $\Sigma_0 \subset V$, with vertices chosen independently with probability $p\leq \frac{\alpha}{d-1}$ we have
	\begin{equation}
		{\rm Prob}\left[ \maxConn_{\alpha}(\Sigma_0) \geq t \right] \leq \abs{V} \left(\frac{w-1}{w-2}\right)^2 \frac{q^t}{1-q}
	\end{equation} 
	where $q = (1-p)^{w-1-\alpha}\, p^\alpha \,2^{h(\alpha)}\Phi$ with $\Phi=(w-1)(1+\tfrac{1}{w-2})^{w-2}$ and $h(\alpha)$ is the binary entropy function.
\label{thm:alpha_percolation}
\end{theorem}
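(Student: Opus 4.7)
The plan is to prove \cref{thm:alpha_percolation} by a Peierls-style first-moment bound over possible witnesses of a large connected $\alpha$-subset. The event $\{\maxConn_\alpha(\Sigma_0) \geq t\}$ implies the existence of a connected $B \subseteq V$ of some size $s \geq t$ with $|B \cap \Sigma_0| \geq \alpha s$. Rooting $B$ at a canonical vertex $v \in B$ and union-bounding over $v \in V$, sizes $s \geq t$, and rooted connected subgraphs $B$ of size $s$ yields
\[
\Pr[\maxConn_\alpha(\Sigma_0) \geq t] \;\leq\; \sum_{s \geq t} |V|\cdot N(s) \cdot \max_{|B|=s} \Pr[|B \cap \Sigma_0| \geq \alpha s],
\]
where $N(s)$ counts connected subgraphs of size $s$ containing a fixed root. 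The first ingredient is the classical subtree/lattice-animal bound $N(s) \leq \Phi^{s-1}$ with $\Phi = (w-1)(1+\tfrac{1}{w-2})^{w-2}$, obtained inductively by growing $B$ one vertex at a time along a spanning tree; at each frontier vertex one has at most $w-1$ outward neighbors, and the refined factor $(1+\tfrac{1}{w-2})^{w-2}$ comes from more careful accounting of which neighbors are blocked by earlier choices.

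The second ingredient is a binomial tail bound for $\Pr[|B \cap \Sigma_0| \geq \alpha s]$. By the i.i.d. structure this equals $\Pr[\mathrm{Bin}(s,p) \geq \alpha s]$, and using $\binom{s}{k} \leq 2^{sh(k/s)}$ together with the hypothesis $p \leq \alpha/(w-1) < \alpha$ (which keeps the binomial density monotonically decreasing for $k \geq \alpha s$), the sum is dominated by its first term, giving a per-vertex cost of order $2^{h(\alpha)} p^\alpha (1-p)^{1-\alpha}$. To sharpen the $(1-p)$ exponent from $1-\alpha$ to $w-1-\alpha$, one couples the subtree enumeration with the occupancy count: each vertex added to the spanning tree is charged not only for its own Bernoulli label but also for the non-occupancy of its $\leq w-2$ non-tree neighbors in the canonical rooted encoding, contributing an extra $(1-p)^{w-2}$ factor per tree vertex. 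Combining with $\Phi^{s-1}$ yields the effective per-vertex rate $q = (1-p)^{w-1-\alpha} p^\alpha 2^{h(\alpha)} \Phi$.

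Finally, summing $|V|\, q^s$ over $s \geq t$ yields the geometric sum $|V|\, q^t/(1-q)$, and the prefactor $\bigl(\tfrac{w-1}{w-2}\bigr)^2$ collects lower-order constants from the subtree count and from the ceiling in $\lceil \alpha s \rceil$. The main obstacle is the interleaved bookkeeping in the second paragraph: naively composing the subtree count with the binomial tail gives only the weaker exponent $(1-p)^{1-\alpha}$, so one must arrange the enumeration to charge for the conditional non-occupancy of the spanning tree's boundary vertices without double-counting boundary vertices shared by multiple tree vertices. The hypothesis $p \leq \alpha/(w-1)$ plays a double role: it ensures the binomial density is decreasing at $\alpha s$, and it enforces $q < 1$ so that the geometric series converges.
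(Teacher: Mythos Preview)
The paper does not supply its own proof of this statement: it is quoted as Theorem~17 of Ref.~\cite{Fawzi2017efficient} and used as a black box, followed only by two remarks. So there is no in-paper argument to compare against; the comments below address your sketch on its own merits.

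Your overall architecture---union bound over rooted connected witnesses, the lattice-animal count $N(s)\le\Phi^{s}$, and a binomial-tail estimate for $\Pr[|B\cap\Sigma_0|\ge\alpha s]$---is the standard route and would immediately yield the stated inequality with $q$ replaced by the weaker rate $q'=(1-p)^{1-\alpha}p^{\alpha}2^{h(\alpha)}\Phi$. That already suffices for every use made of the theorem in this paper, since $q$ and $q'$ agree to leading order as $p\to 0$.

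The genuine gap is your mechanism for upgrading the exponent on $(1-p)$ from $1-\alpha$ to $w-1-\alpha$. You propose to ``charge'' each spanning-tree vertex for the non-occupancy of its $\le w-2$ non-tree neighbours, but those neighbours may themselves lie in $B$ (in which case their occupancy is unconstrained) or may be shared among several tree vertices (double counting). You explicitly flag this as ``the main obstacle'' and then leave it unresolved, so the sketch as written does not establish the stated $q$. One natural source of the extra $(1-p)$ factors is \emph{maximality}: if $S$ is the \emph{maximal} connected $\alpha$-subset of $\Sigma_0$ rather than an arbitrary witness of size $\ge t$, then adjoining any occupied outer neighbour would produce a strictly larger connected $\alpha$-subset, so the entire outer boundary of $S$ must lie outside $\Sigma_0$ (cf.\ the argument of \cref{lem:maxconn_alpha2}). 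Turning this into the precise rate $q$ still requires an enumeration of (connected set, outer boundary) pairs that avoids double counting, which is more delicate than the per-vertex $(1-p)^{w-2}$ multiplier you describe.

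A minor further point: the hypothesis $p\le\alpha/(w-1)$ does \emph{not} by itself force $q<1$ (take $\alpha$ fixed and $w$ large); it only guarantees monotonicity of the binomial density past $\alpha s$. Convergence of the geometric series needs the separate condition $p<p_c$ noted in the remark following the theorem.
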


\begin{remark}
    For any finite graph degree $w$, and any value of $0 < \alpha \leq 1$,  there exists a threshold $p_c > 0$, such that $q < 1$ for $p < p_c$.
\end{remark}

\begin{remark}
    For $q < 1$ (i.e.\ $p < p_c$, see previous remark), the probability of the random subset $\Sigma_0$ having a connected $\alpha$ subset of size larger than $t$ decays exponentially in $t$. In this case, the typical size of the largest connected $\alpha$-subset of $\Sigma_0$ hence scales as $\order{\log \abs{V}}$
\end{remark}

Note that we state here the version of the theorem where the subgraph is chosen at each vertex independently, more generally we only need to demand a local stochasticity condition, that is $\mu(\Sigma_0) \leq p^{\abs{\Sigma_0}}$ for some $0 < p < 1$. 
Below, we will use $\alpha$-percolation on the interaction graph of the Hamiltonian $\hamil$ to guarantee that at low temperature excitations are distributed sparsely within the graph. The i.i.d.\ condition in this case corresponds to the condition that the parity check matrices $\Hcheck_X$ and $\Hcheck_Z$ have no redundant (linearly dependent) rows, i.e. $\rank \Hcheck_X = m_X$ and $\rank \Hcheck_Z = m_Z$.
Local stochasticity however is still fulfilled in the presence of redundancies and hence the results presented also hold in this case. This is because the presence of redundancies simply implies that certain syndromes are forbidden (e.g. in the 2D Ising model, only closed loops of violated bonds are ``allowed''), but we still have $\mu(\Sigma_0) \leq p^{\abs{\Sigma_0}}$ simply because the energy is, by definition, proportional to the size of the syndrome.

Note that throughout this section, we use $\alpha$-subsets on two distinct graphs. In the definitions of the subspaces $\mathcal V$ in \autoref{eq:gibbs_subspaces}, we constrain the size of fluctuations to have no large $\alpha$-subsets, and fluctuations are subsets of qubits that are connected if they share a check. In the following, we use $\alpha$-percolation to ensure that the syndrome $\vec\Sigma_0$ of typical states at low temperature has no large $\alpha$-subset. The syndrome, however, is a set of checks, and a set of checks is connected when they are of the same type $X$ or $Z$, and they share a bit.

Below, we show a few simple properties of sets with bounded $\maxConn_\alpha$. These will prove to be useful in upper bounding the bottleneck ratio.

We first make more precise the intuition, already given above in \cref{sec:suppl_ergo_breaking_setup} and in particular in \cref{fig:alpha_subset}, that demanding the absence of large connected $\alpha$-subsets in a set means that connected components of the set have to be small \emph{and} far apart. To this end, we prove the following

\begin{lemma}
    \label{lem:maxConn_alpha_maxConn}
    Let $G = (G, E)$ be a graph, $\alpha \in (0, 1]$ and $L$ an integer. 
    Also let $A\subset V$ be a subset of vertices with $\maxConn_{\alpha}(A) \leq L$, and $B \subset V$ a  subset of vertices with $\abs{B} \leq L(1-\alpha)$. Then
    \begin{equation}
       \maxConn(A \cup B) \leq L
    \end{equation}
\end{lemma}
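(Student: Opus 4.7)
The plan is to show that an arbitrary connected subset $C \subseteq A \cup B$ satisfies $|C| \leq L$, which immediately gives $\maxConn(A \cup B) \leq L$. The key move is a dichotomy based on whether $C$ itself happens to be an $\alpha$-subset of $A$ in the sense of \cref{def:alpha_subset}.

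First I would write $|C| = |C \cap A| + |C \setminus A|$ and observe that, because $C \subseteq A \cup B$, we have $C \setminus A \subseteq B$ and therefore $|C \setminus A| \leq |B| \leq L(1-\alpha)$ by hypothesis. Now I split into two cases. If $|C \cap A| \geq \alpha |C|$, then $C$ is a connected $\alpha$-subset of $A$, and the assumption $\maxConn_{\alpha}(A) \leq L$ directly gives $|C| \leq L$. Otherwise $|C \cap A| < \alpha |C|$, in which case $|C \setminus A| > (1-\alpha)|C|$, and combining with the bound $|C \setminus A| \leq L(1-\alpha)$ yields $(1-\alpha)|C| < L(1-\alpha)$. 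For $\alpha < 1$ this gives $|C| < L$; the degenerate case $\alpha = 1$ forces $|B| = 0$, so $A \cup B = A$, and the conclusion reduces to the tautology $\maxConn_{1}(A) = \maxConn(A) \leq L$.

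In both cases $|C| \leq L$, so taking the maximum over all connected subsets $C$ of $A \cup B$ gives the claim. There is no real obstacle here beyond correctly unpacking the definition of an $\alpha$-subset and being careful that $C$ is automatically connected (so qualifies as a candidate in the maximization defining $\maxConn_{\alpha}(A)$) without needing to modify it. The argument is essentially an accounting identity: the only way a connected set in $A \cup B$ can exceed the size $L$ allowed to connected $\alpha$-subsets of $A$ is to recruit more than $L(1-\alpha)$ vertices from outside $A$, which the budget on $|B|$ forbids.
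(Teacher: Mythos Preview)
Your proof is correct and is essentially the same argument as the paper's. The paper phrases it as a proof by contradiction (assume the largest connected component $S$ of $A\cup B$ has $|S|>L$, then use $|S\setminus A|\leq |B|\leq (1-\alpha)L<(1-\alpha)|S|$ to conclude $|A\cap S|\geq\alpha|S|$, contradicting $\maxConn_\alpha(A)\leq L$), whereas you run the same inequality directly via a dichotomy on whether $C$ is already an $\alpha$-subset of $A$; the underlying computation is identical.
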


\begin{proof}\label{lem:maxconn_alpha1}
    We argue by contradiction. 
    Assume $\maxConn(A \cup B) > L$, and denote the largest connected component of $A\cup B$ by $S$. Note that by removing $B$ from $A\cup B$ and $S$, we only reduce the weight of $S$ by at most $\abs{B} \leq (1-\alpha)L$. Hence $\abs{A \cap S} 
        \geq \abs{S} - \abs{B} 
        \geq \abs{S} - L(1-\alpha)
        \geq \alpha\abs{S}$.
    However, since $\abs{S} > L$, this is in contradiction with the assumption that $\maxConn_\alpha(A) \leq L$.
\end{proof}

We will also need the following lemma, which, informally stated, tells us that the maximal connected $\alpha$ subset of a set $A$ is a collection of independent connected components of $A$, and hence separated from the rest of $A$.
\begin{lemma}\label{lem:maxconn_alpha2}
     Let $G = (G, E)$ be a graph, $\alpha = (0, 1]$, and $A\subset V$ a subset of vertices. Denote by $S$ the maximal connected $\alpha$-subset of $A$, and by $S^c = V / S$ its complement. Then no edge in $E$ connects $S\cap A$ and $S^{c}\cap A$.
\end{lemma}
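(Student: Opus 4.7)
The plan is a short proof by contradiction. Suppose, toward a contradiction, that there is an edge $e = (u,v) \in E$ with $u \in S \cap A$ and $v \in S^c \cap A$. The strategy is to construct a strictly larger connected $\alpha$-subset of $A$ by attaching $v$ to $S$, which would contradict the maximality of $|S|$ among connected $\alpha$-subsets of $A$.

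Concretely, I would consider $S' := S \cup \{v\}$ and check the two defining properties. Connectivity of $S'$ is immediate: $S$ is connected and $v$ is joined to $u \in S$ by the hypothesized edge $e$, so $v$ attaches to the connected set $S$. For the $\alpha$-subset property, using $v \in A$ and $v \notin S$ we get $|A \cap S'| = |A \cap S| + 1$ and $|S'| = |S| + 1$. Combining with the assumption $|A \cap S| \geq \alpha|S|$ and the fact that $\alpha \in (0,1]$,
\[
|A \cap S'| \;\geq\; \alpha|S| + 1 \;\geq\; \alpha|S| + \alpha \;=\; \alpha|S'|,
\]
so $S'$ is a connected $\alpha$-subset of $A$ with $|S'| = |S| + 1 > |S|$, contradicting the choice of $S$ as a size-maximizer.

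The only mild subtlety is that a maximal connected $\alpha$-subset need not be unique, but this is harmless because the contradiction argument applies to any size-maximizing choice of $S$. I do not anticipate a real obstacle: the content of the lemma is essentially that, for $\alpha \le 1$, appending one $A$-vertex across an edge can only preserve (never destroy) the $\alpha$-density condition, so any $A$-neighbor of $S \cap A$ that lies in $A$ must already have been absorbed into $S$.
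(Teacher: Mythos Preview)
Your proposal is correct and essentially identical to the paper's own proof: both argue by contradiction, adjoin the offending vertex $v$ to form $S' = S \cup \{v\}$, and use $\alpha \le 1$ to verify that $|A \cap S'| = |A \cap S| + 1 \ge \alpha|S| + 1 \ge \alpha|S'|$, contradicting maximality of $S$.
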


\begin{proof}
    We argue by contradiction. Assume that there exists $(u, v)\in E$ such that $u \in S\cap A$ and $v\in S^{c}\cap A$. Then consider $S' = S \cup \{v\}$, which is connected, $\abs{S'} = \abs{S} + 1 > \abs{S}$, and
    \begin{equation}
        \abs{A \cap S'} = \abs{A\cap S} + 1 
        \geq \alpha \abs{S} + 1 
        \geq \alpha(\abs{S} + 1)
        \geq \alpha \abs{S'}.
    \end{equation}
    This is contradiction with the fact that $S$ is the \emph{maximal} $\alpha$ subset of $A$.
\end{proof}

\subsection{Typical states are surrounded by energy barriers}

In this section, we characterize the landscape around typical low temperature eigenstates of qLDPC codes with linear confinement This will provide some intuition that is helpful for the proof of the bottleneck condition in \cref{thm:botteleneck2}. More generally, the characterization of the energy landscape is also of independent interest in itself.

Below, we discuss properties at the level of a single parity check matrix $H\in\mathbb{F}_2^{m\times n}$, rather than the full qLDPC code; nevertheless, we do assume that the check matrix fulfills the linear confinement property with respect to the reduced weight $\norm{\bullet}$ defined in \cref{def:reduced_weight}. Formally, this means we are considering one side of a three-term chain complex with linear (co-)boundary confinement as defined in \cref{def:expansion}. Also note that whenever below we talk about connected (sub)-sets of bits or checks, we define the connectivity with respect to the bit- and check-graph respectively. Two bits are connected when they share a check, and two checks are connected when they share a bit.

We start our characterization of the landscape by noting that low-temperature states with high probability have an ``effective linear confinement'' property for fluctuations that are both sufficiently large and sufficiently small.

\begin{proposition}[Barriers around typical low-temperature states]
\label{prop:x0_expansion}
    Consider $H\in \mathbb{F}_2^{m \times n}$ of full rank, LDPC, and with linear $(\delta, \gamma)$-confinement.
    Let $0 < \eta(n) < \delta(n)$ be a function that diverges with $n$, and let $\muG(\vec y) = Z^{-1}\exp(-\beta \abs{H \vec y})$ be the Gibbs distribution of $H$. 
    Then, $\forall \epsilon > 0$, $\exists \beta^*, c > 0$, such that $\forall \beta > \beta^*$, for a state $\vec x_0$ chosen randomly from $\muG$,  with probability
    \begin{equation}
        p_{\rm conf} = 1 - \order{n\, e^{-c \eta(n)}}
    \end{equation}
   we have that for all perturbations $\vec x$ such $\vec x\reduced$ is connected and $\eta(n) \leq \norm{\vec x} \leq \delta(n)$
    \begin{equation}
        E(\vec x_0 + \vec x) - E(\vec x_0) \geq (\gamma - \epsilon) \norm{\vec x}.
    \end{equation}
\end{proposition}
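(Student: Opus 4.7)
The strategy combines linear confinement (which lower-bounds $|H\vec x|$) with an $\alpha$-percolation argument on the syndrome $\vec \Sigma_0 := H \vec x_0$, which at low temperature is so sparse that it can cancel only a small fraction of the violated checks of any confined perturbation. Writing the energy difference in syndrome variables via $|\vec u + \vec v| = |\vec u| + |\vec v| - 2|\supp\vec u \cap \supp\vec v|$,
\begin{equation}
E(\vec x_0 + \vec x) - E(\vec x_0) \;=\; |H\vec x| \,-\, 2\,|\supp\vec\Sigma_0 \cap \supp H\vec x|,
\end{equation}
linear confinement handles the first term ($|H\vec x| \geq \gamma\|\vec x\|$ since $\|\vec x\| \leq \delta(n)$), so the whole task is to bound the overlap term by $\tfrac{\epsilon}{2}\|\vec x\|$ uniformly in $\vec x$ on a high-probability event for $\vec x_0$.

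To set up the percolation bound, let $N(\vec x)$ denote the set of checks incident to $\supp(\vec x\reduced)$, so that $\supp(H\vec x) \subseteq N(\vec x)$ and $|N(\vec x)| \leq \wbit\,\|\vec x\|$ by the LDPC hypothesis. A short bit-path argument shows that if $\vec x\reduced$ is connected in the bit graph, then $N(\vec x)$ is connected in the check graph: given two checks of $N(\vec x)$ containing bits $b, b'$ of $\vec x\reduced$ and a bit-path $b = b_0,\ldots,b_k = b'$, the checks shared by consecutive $b_i, b_{i+1}$ chain the two original checks together. Since $H$ has full rank, the partition function factorizes and the marginals of $\vec\Sigma_0$ are i.i.d.\ Bernoulli with $p = e^{-\beta}/(1+e^{-\beta})$, so for any fixed $\alpha$ one can pick $\beta^*$ with $p(\beta^*)$ below the $\alpha$-percolation threshold of \cref{thm:alpha_percolation} (which depends only on $\alpha$ and the bounded check-graph degree). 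Setting $\alpha := \epsilon/(2\wbit)$ and $t := \lfloor\tfrac{1}{2}\epsilon\,\eta(n)\rfloor$, the theorem yields
\begin{equation}
\Pr\bigl[\,\maxConn_\alpha(\vec\Sigma_0) \geq t\,\bigr] \;=\; \order{n\,e^{-c\,\eta(n)}}
\end{equation}
for some $c > 0$ depending only on $\beta^*$, $\epsilon$, $\wbit$, $\wcheck$, since $q(p(\beta),\alpha) < 1$ is monotone decreasing in $\beta$.

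On the complementary high-probability event $\maxConn_\alpha(\vec\Sigma_0) < t$, the uniform bound on the overlap follows from a two-case split. If $|N(\vec x)| \leq t$, then $|\supp\vec\Sigma_0 \cap \supp H\vec x| \leq |N(\vec x)| \leq t \leq \tfrac{\epsilon}{2}\eta(n) \leq \tfrac{\epsilon}{2}\|\vec x\|$, using $\|\vec x\| \geq \eta(n)$. Otherwise $|N(\vec x)| > t$: since $N(\vec x)$ is connected, the very definition of $\maxConn_\alpha$ forces $|\vec\Sigma_0 \cap N(\vec x)| < \alpha|N(\vec x)| \leq \alpha\wbit\|\vec x\| = \tfrac{\epsilon}{2}\|\vec x\|$. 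Either way the overlap is at most $\tfrac{\epsilon}{2}\|\vec x\|$, producing $E(\vec x_0+\vec x) - E(\vec x_0) \geq (\gamma-\epsilon)\|\vec x\|$. The main obstacle is the bookkeeping in the middle step: the $\alpha$-percolation threshold depends on $\alpha$, which in turn must be shrunk to match the target $\epsilon$, and one must verify that a single pair $(\beta^*, c)$ works uniformly for all $\beta > \beta^*$; this works out because the constant-degree nature of the LDPC interaction graph keeps the threshold an $n$-independent quantity, and monotonicity of $q$ in $\beta$ preserves $c$ as $\beta$ is increased.
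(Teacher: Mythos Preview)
Your proof is correct and follows essentially the same line as the paper's: both use the syndrome identity $E(\vec x_0+\vec x)-E(\vec x_0)=|H\vec x|-2|\vec\Sigma_0\wedge H\vec x|$, apply linear confinement to the first term, invoke the i.i.d.\ structure of $\vec\Sigma_0$ (from full rank) together with $\alpha$-percolation on the check graph with the same choice $\alpha=\zeta=\epsilon/(2\wbit)$, and pass to the connected check-neighborhood $N(\vec x)=\Gamma(\vec x)$ to control the overlap. Your two-case split on $|N(\vec x)|\lessgtr t$ is a minor refinement over the paper, which tacitly assumes $|\Gamma(\vec x)|\geq\eta(n)$ when invoking the definition of $\maxConn_\alpha$; otherwise the arguments coincide.
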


\begin{figure}
    \centering
    \includegraphics{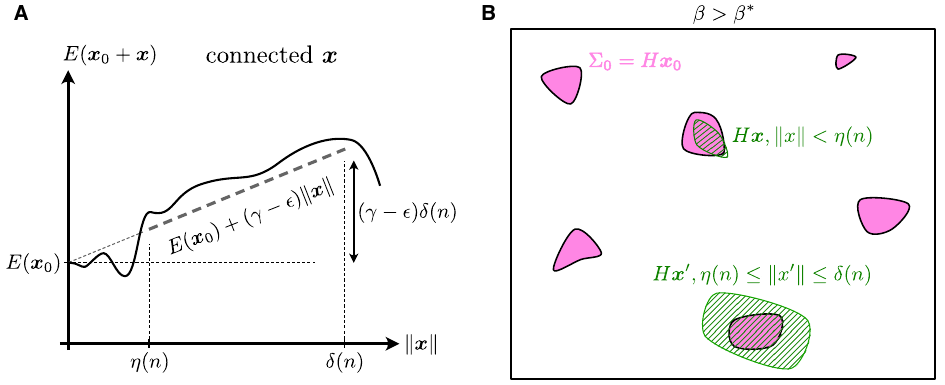}
    \caption{Illustration of the ``modified confinement'' around typical low-temperature states. (a) Around typical low energy states, \cref{prop:x0_expansion} still guarantees a barrier for \emph{connected} perturbations, but only if their size is in some given range $[\eta(n),\delta(n)]$. 
    (b) For typical $\vec x_0$, picked randomly from a low-temperature Gibbs distribution, the syndrome $\vec \Sigma_0 = H \vec x_0$ (drawn in pink) consists of small disconnected components. Small fluctuations (e.g. $\norm{\vec x} < \eta(n)$) may cancel out some of this syndrome, effectively lowering the energy. However, if the perturbation grows into a large connected set, it gets larger than the typical components of $\vec \Sigma_0$ and has to eventually increase the energy.}
    \label{fig:x0_expansion}
\end{figure}

Before proving this proposition, let us unpack the statement, which is also illustrated in \cref{fig:x0_expansion}. The idea is to draw a reference state $\vec x_0$ from the Gibbs measure at low, but finite temperature. The statement of the above proposition is that below a critical temperature, with probability going to one in the thermodynamic limit, this state fulfills a modified linear confinement property around it. In particular, unlike \cref{def:expansion} which applies to any error with size below $\delta(n)$, this modified confinement around $\vec x_0$ only applies to a particular class of relative errors $\vec x$, namely those that are (a) connected and (b) larger than above some $n$-dependent threshold $\eta(n)$. This threshold can be chosen arbitrarily, as long as it diverges super-logarithmically with $n$; the larger the threshold, the closer the probability $p_{\rm conf}$ will be to $1$.  

In particular, we note the following
\begin{remark}
    \cref{prop:x0_expansion} implies that $\forall \epsilon >0$, $\exists\beta^* >0$ such that for $\beta > \beta^*$ a typical state $\vec x_0$ is separated from $\vec x_0 + \vec z$, where $\vec z$ is a logical operator, by a barrier of size larger than $(\gamma-\epsilon)\delta(n)$.
\end{remark}
This is because just as expansion around codewords, the result of \cref{prop:x0_expansion} applies to each connected component of the perturbation $\vec x\reduced$ separately, and also due to expansion, logical operators have to have at least one connected component larger than $2\delta(n)$.

The resulting landscape around a typical low-temperature state is hence comprised of a ``plateau'' of super-logarithmic size, which is surrounded by subextensive barriers. Bounding the size of this plateau more quantitatively will be the topic of \cref{suppl:Sconf}.

The intuition behind this proposition is illustrated on the right of \cref{fig:x0_expansion}. While $\vec x_0$ violates a finite fraction of the checks, at low temperature its syndrome $\vec\Sigma_0 = H\vec x_0$ is sparse, i.e., it has no large connected component, or connected $\alpha$-subset, as indicated by \cref{thm:alpha_percolation} above. While a sufficiently small perturbation ($\vec x$ in the figure) can simply un-trigger some of the violated checks --- which decreases, rather than increases the energy --- a large connected perturbation eventually grows larger than the size of the largest connected component of $\vec \Sigma_0$ and has to increase the energy ($\vec x'$ in the figure).

\begin{proof}[Proof of \cref{prop:x0_expansion}]
    We assume w.l.o.g. that $\vec x=\vec x\reduced$ and hence $\vec x$ is connected. Note that the syndrome of $\vec x$, which we denote by $\vec\Sigma = H\vec x$,  may not be connected. However, we can consider the neighborhood $\Gamma(\vec x)$, that is the set of all checks incident to at least one bit in $\vec x$, and this is a connected superset of $\vec\Sigma$.
    We will use that
    \begin{equation}
        \abs{\vec \Sigma} \leq \abs{\Gamma(\vec x)} \leq w_{\rm bit} \norm{\vec x}
        \label{eq:sigma_gamma_bound}
    \end{equation}
    where $w_{\rm bit}$ is (an upper bound on) the bit-degree of $H$.

    The state $\vec x_0$ is chosen randomly from the Gibbs distribution.
    This is equivalent to choosing a \emph{syndrome} $\vec \Sigma_0 = H \vec x_0$ and then inverting the linear equation to obtain $\vec x_0$. Since the matrix $H$ is full rank, the syndrome $\vec \Sigma_0$ can be sampled by choosing its components as i.i.d random variables 
    \begin{equation}\label{eq:syndrome_iid}
        \Sigma_{0, j} = \begin{cases}
            1 &\text{with probability}~\frac{e^{-\beta}}{1 + e^{-\beta}} \\
            0 &\text{with probability}~\frac{1}{1+e^{-\beta}}
        \end{cases}.
    \end{equation}
    We can hence apply\cref{thm:alpha_percolation} on the check-graph (each check is a vertex and two vertices are connected if the corresponding checks share a bit) with $p = e^{-\beta} / (1 + e^{-\beta})$. Note that the degree of the check graph is bounded from above by $w_{\rm bit} w_{\rm check}$ .
    This implies in particular that for all $\zeta > 0$, $\exists \beta^*$ such that for $\beta > \beta^*$
    \begin{equation}
        \muG\left[\maxConn_{\zeta}(\vec \Sigma_0) \geq \eta \right] < ({\rm const})\times n \, e^{-c \eta}
        \label{eq:alpha_percolation_sigma0}
    \end{equation}
    where the constants $\beta^*$ and $c$ are fixed in \cref{thm:alpha_percolation}.

    Now note that in general
    \begin{align}
        E(\vec x_0 + \vec x) - E(\vec x_0)
            &= \abs{H(\vec x_0 + \vec x)} - \abs{H\vec x_0} \\
            &= \abs{H \vec x_0} + \abs{H\vec x} - 2 \abs{H\vec x_0 \wedge H\vec x} - \abs{H\vec x_0} \\
            &= \abs{H\vec x} - 2 \abs{\vec \Sigma_0 \wedge \vec \Sigma}
    \end{align}
    where $\wedge$ denotes bit-wise AND. 
    Using expansion of $H$ and $\cref{eq:sigma_gamma_bound}$ yields
    \begin{align}
        E(\vec x_0 + \vec x) - E(\vec x_0)
            &= \abs{H\vec x} - 2 \abs{\vec \Sigma_0 \wedge \vec \Sigma} \\
            &\geq \gamma \norm{\vec x} - 2 \abs{\vec\Sigma_0\wedge\vec\Gamma(\vec x)}.
    \end{align}
    Finally, we now use $\alpha$-percolation as set up in \cref{eq:alpha_percolation_sigma0} to conclude that with probability
    \begin{equation}
        P_{\rm exp} := \muG[\maxConn_{\zeta}(\vec\Sigma_0) < \eta] = 1 - \muG[\maxConn_{\zeta}(\vec\Sigma_0) \geq \eta] \geq 1 - \order{n\, e^{-\eta}}
    \end{equation}
    we have that
    \begin{align}
    E(\vec x_0 + \vec x) - E(\vec x)
            &\geq \gamma \norm{\vec x} - 2 \abs{\vec\Sigma_0\wedge\vec\Gamma(\vec x)} \\
            &\geq \gamma \norm{\vec x} - 2 \zeta \abs{\Gamma(\vec x)} \\
            &\geq (\gamma - 2 \zeta w_{\rm bit}) \norm{\vec x}
    \end{align}
    choosing $\zeta = \epsilon / (2 w_{\rm bit})$ then yields the desired inequality.
\end{proof}

\subsection{Typical states are surrounded by a bottleneck}\label{suppl:bottleneck}

We are now in a position to prove the desired bottleneck condition for the set $\Omega(\vec x_0)$ and its boundary $\partial \Omega(\vec x_0)$ defined in \cref{sec:suppl_ergo_breaking_setup}, for the case of a typical $\vec x_0$ drawn from the Gibbs distribution at low temperatures.

\begin{theorem}[Bottleneck around typical low-temperature states]
\label{thm:botteleneck2}
Consider $H\in \mathbb{F}_2^{m\times n}$ of full rank, LDPC, and with $(\delta, \gamma)$-expansion. Let $\vec x_0$ be a state chosen from the Gibbs distribution $\muG(\vec x) = Z^{-1} \exp(-\beta \abs{H \vec x})$ and define $\Omega(\vec x_0)$ and $\partial\Omega(\vec x_0)$ as in \autoref{eq:gibbs_subspaces} and \autoref{eq:gibbs_subspaces_boundary}. Then there exist $\beta^*, c_1 > 0$, such that for any $\beta > \beta^*$, it holds with probability
\begin{equation}
	p_{\rm bottleneck} = 1 - \order{e^{-c_1\,\delta(n)}}
	\label{eq:typical_state}
\end{equation}
that the bottleneck ratio obeys
\begin{equation}
	\frac{\muG[\partial\Omega (\vec x_0)]}{\muG[\Omega(\vec x_0)]} \leq e^{-c\delta(n)} \xrightarrow[n\to \infty]{} 0.
\end{equation}
\end{theorem}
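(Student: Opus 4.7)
The plan is to bound the classical bottleneck ratio $\muG[\partial\Omega(\vec x_0)]/\muG[\Omega(\vec x_0)]$ by a Peierls-style injection from $\partial\Omega$ back towards $\Omega$, combined with the $\alpha$-percolation property of the typical syndrome $\vec\Sigma_0=H\vec x_0$ that was already established in the proof of \cref{prop:x0_expansion}. Since $H$ has full rank, sampling $\vec x_0\sim\muG$ is equivalent to drawing $\vec\Sigma_0$ i.i.d.\ Bernoulli with parameter $p(\beta)=e^{-\beta}/(1+e^{-\beta})$. Setting $\eta(n)=\delta(n)/(4w_{\rm check})$ and choosing a sufficiently small $\zeta>0$ (fixed below), \cref{thm:alpha_percolation} applied to the check graph gives, with probability at least $1-\order{e^{-c_1\delta(n)}}$, that $\vec\Sigma_0$ contains no connected $\zeta$-subset of size $\geq\eta(n)$. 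I work on this event throughout.

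For any $\vec x\in\partial\Omega(\vec x_0)$, set $\vec y=(\vec x-\vec x_0)\reduced$. By construction some connected $\tfrac12$-subset $S$ of $\vec y$ has $|S|>\delta(n)/2$, while \cref{lem:maxConn_alpha_maxConn} applied to the fact that $\vec x$ is within reduced distance $\delta(n)/4$ of some $\vec x'\in\Omega$ forces every connected component of $\vec y$ to have size $\leq\delta(n)/2$. Writing $\vec y_{\rm big}:=S\cap\vec y$ (so $|\vec y_{\rm big}|\geq|S|/2>\delta(n)/4$) and $\vec y_{\rm small}:=\vec y\setminus\vec y_{\rm big}$, \cref{lem:maxconn_alpha2} shows that $\Gamma(\vec y_{\rm big})$ and $\Gamma(\vec y_{\rm small})$ are disjoint, so the candidate inverse $\phi(\vec x):=\vec x-\vec y_{\rm big}$ obeys
\begin{equation*}
E(\vec x)-E(\phi(\vec x))=|H\vec y_{\rm big}|-2|\vec\Sigma_0\cap H\vec y_{\rm big}|.
\end{equation*}
Each connected component of $\vec y_{\rm big}$ is in fact a connected component of $\vec y$ (using \cref{lem:maxconn_alpha2} once more) and therefore of size $<\delta(n)$, so applying linear confinement componentwise gives $|H\vec y_{\rm big}|\geq\gamma|\vec y_{\rm big}|$. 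On the other hand, $\Gamma(S)$ is connected in the check graph with $|\Gamma(S)|\leq w_{\rm bit}|S|\leq 2w_{\rm bit}|\vec y_{\rm big}|$ and $|\Gamma(S)|\gtrsim|S|/w_{\rm check}\geq\eta(n)$, so the good event bounds $|\vec\Sigma_0\cap H\vec y_{\rm big}|\leq\zeta|\Gamma(S)|\leq 2\zeta w_{\rm bit}|\vec y_{\rm big}|$. Choosing $\zeta<\gamma/(4w_{\rm bit})$ yields a linear energy gap $E(\vec x)-E(\phi(\vec x))\geq c_E|\vec y_{\rm big}|$ with $c_E>0$.

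On the entropy side, the LDPC (bounded-degree) property bounds the number of connected subsets of the bit graph of size $s$ containing a fixed vertex by $C_0^s$. A valid $\vec y_{\rm big}$ of size $L$ lies inside a connected set $S$ of size $\leq 2L$, giving at most $N(L)\leq n(4C_0^2)^L$ choices overall (a root, the set $S\supset\vec y_{\rm big}$, and the subset $\vec y_{\rm big}\subset S$). The resulting Peierls-style sum
\begin{equation*}
\muG[\partial\Omega]\leq\muG[\phi(\partial\Omega)]\sum_{L\geq\delta(n)/4}n\,(4C_0^2\,e^{-\beta c_E})^L
\end{equation*}
is geometric once $\beta$ is large enough that $4C_0^2\,e^{-\beta c_E}<1$, and since $\log n=o(\delta(n))$ the tail is bounded by $e^{-c'\delta(n)}$ for some $c'>0$.

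The main obstacle is that $\phi(\vec x)$ need not itself lie in $\Omega$: a second connected $\tfrac12$-subset of $\vec y$ disjoint from $S$ might survive the removal of $\vec y_{\rm big}$. Consequently one only has $\phi(\partial\Omega)\subset\Omega\cup\partial\Omega$, turning the previous bound into a self-consistent inequality $\muG[\partial\Omega]\leq e^{-c'\delta(n)}(\muG[\Omega]+\muG[\partial\Omega])$, which rearranges to the advertised $\muG[\partial\Omega]/\muG[\Omega]\leq 2e^{-c'\delta(n)}$ once $n$ is large enough. A cleaner alternative would iteratively peel off connected $\tfrac12$-subsets until landing in $\Omega$, at the cost of more careful bookkeeping of the multiplicity of the iterated map; the self-consistent form used here avoids this at a harmless cost of a factor $2$.
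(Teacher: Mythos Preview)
Your overall architecture is the same as the paper's: a Peierls-type injection that removes a large connected $\tfrac12$-subset from $\vec y=(\vec x-\vec x_0)\reduced$, combined with $\alpha$-percolation of the syndrome $\vec\Sigma_0$ (via \cref{thm:alpha_percolation}) to control the cross-term $|\vec\Sigma_0\cap H\vec y_{\rm big}|$, and a lattice-animal count for the entropy of the removed cluster. The energy and entropy estimates you give are essentially the ones the paper uses.

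The genuine gap is the self-consistent inequality. You assert $\phi(\partial\Omega)\subset\Omega\cup\partial\Omega$, but this need not hold: if $\vec y$ has a second large $\tfrac12$-subset disjoint from $S$, then $\phi(\vec x)\notin\Omega$, and there is no reason $\phi(\vec x)$ remains within Hamming distance $\tfrac14\delta(n)$ of $\Omega$ (you have removed $|\vec y_{\rm big}|>\tfrac14\delta(n)$ bits, so the nearby witness $\vec x'\in\Omega$ for $\vec x$ is no longer nearby for $\phi(\vec x)$, and $\vec x'-\vec y_{\rm big}$ is not obviously in $\Omega$). Thus the inequality $\muG[\partial\Omega]\leq e^{-c'\delta(n)}(\muG[\Omega]+\muG[\partial\Omega])$ does not follow as written. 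A cheap repair is to replace $\partial\Omega$ on both sides by the larger set $\widetilde\Omega:=\{\vec x:\maxConn(\vec y)\leq\tfrac12\delta(n),\ \maxConn_{1/2}(\vec y)>\tfrac12\delta(n)\}$; then $\phi(\widetilde\Omega)\subset\Omega\cup\widetilde\Omega$ genuinely holds (subsets can only lower $\maxConn$), your energy estimate still applies because the bound $\maxConn(\vec y)\leq\tfrac12\delta(n)$ is preserved under $\phi$, and $\partial\Omega\subset\widetilde\Omega$ gives the conclusion. The paper instead does exactly the ``cleaner alternative'' you mention: it \emph{iterates} the removal of maximal $\tfrac12$-subsets until landing in $\Omega$, and controls the multiplicity of the iterated map via a lattice-animal count together with a bound on the number of integer partitions of the total removed size $|S|=\sum_i|S_i|$. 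Two small additional points: your use of \cref{lem:maxconn_alpha2} to show that connected components of $\vec y_{\rm big}$ are components of $\vec y$ requires $S$ to be the \emph{maximal} connected $\tfrac12$-subset, not merely ``some'' such subset; and the passage from $\maxConn_{1/2}(\vec y'\reduced)\leq\tfrac12\delta(n)$ plus $|\vec b|<\tfrac14\delta(n)$ to $\maxConn(\vec y\reduced)\leq\tfrac12\delta(n)$ deserves a word about why reduction does not spoil the set-inclusion $\vec y\subseteq\vec y'\cup\vec b$ (the paper is also terse here).
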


Our proof strategy is similar to that used in Ref. \onlinecite{Hong2024quantum_memory}, to show self-correction in the ground states of hypergraph product codes, but our approach is different in two important ways.
First, we leverage the energy barriers around typical low-temperature states derived in the last section, to show a bottleneck condition around typical finite-temperature states, which in principle can be far from any ground state.
Second, we show a bottleneck condition for a boundary of diverging (with $n$) width. This allows us, using \cref{thm:bottleneck}, to bound the mixing time of \emph{any} local channel that has $\rhoG$ as a steady state.

\begin{proof}

The proof has two steps. As sketched in \cref{fig:y_to_z_mapping}, we first show that each state in $\partial \Omega$ can be related to a state in $\Omega$ by setting all bits on a set $S$ to zero in such a way that this decreases the energy by an amount proportional to $\abs{S}$.
In the second step, we bound the relative degeneracy of this mapping to show the desired bottleneck ratio.

\begin{figure}
    \centering
    \includegraphics{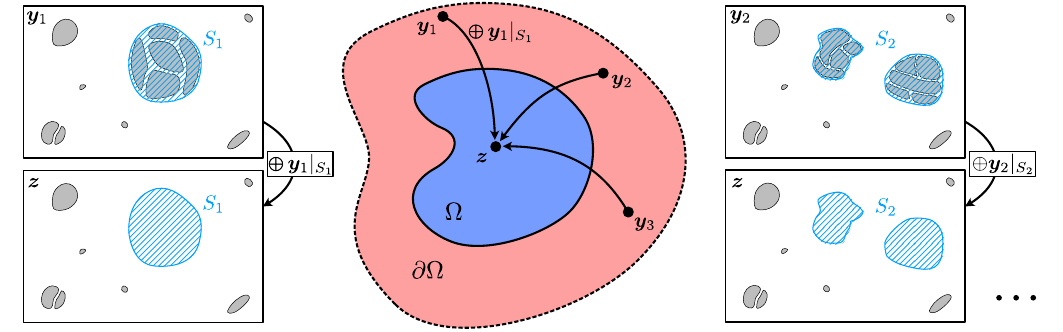}
    \caption{Mapping from states $\vec y$ in the bottleneck region, $\partial \Omega$, to states $\vec z(\vec y) = \vec y + \vec y\vert_S$ in $\Omega$, as used in the proof of \cref{thm:botteleneck2}.
    The mapping is generally many-to-one.
    The state $\vec z$ is obtained from $\vec y_1$, $\vec y_2$, $\dots$, by an iterated ``removal'' of the largest connected $\alpha=1/2$ subset, until no large such subset remains.
    We show that the state $\vec z (\vec y)$ has lower energy than $\vec y$ by an amount proportional to $\abs{S}$ [\cref{eq:deltaE-y}].
    }
    \label{fig:y_to_z_mapping}.
\end{figure}

\vspace{1em}
{\bf Bounding the relative energy of states in $\partial \Omega$ and $\Omega$}
\\[0.5em]
By construction, for $\vec x_0 + \vec y \in \partial\Omega$ we have $\maxConn_{\alpha=1/2}(\vec y\reduced) > \tfrac{1}{2}\delta(n)$.
Because of this, we can now map states $\vec x_0 + \vec y \in \partial \Omega$ to states $\vec x_0 + \vec z(\vec y) \in \Omega$ with lower energy.
Denote by $S$ the largest connected $\alpha=1/2$-subset of $\vec y\reduced$, and set all bits in $S$ to zero to obtain $\vec y'$, that is $\vec y' := \vec y + \vec y\vert_{S}$, where $\vec y\vert_{S}$ is the restriction of the vector $\vec y$ onto $S$, and addition is mod 2. Note that by assumption $\abs{S} > \tfrac{1}{2}\delta(n)$

Denoting the syndrome of $\vec x_0$ by $\vec\Sigma_0 := H \vec x_0$, we further know by \cref{thm:alpha_percolation} that for all $\epsilon > 0$, there exists $\beta^*$, such that for $\beta > \beta^*$ with high probability 
\begin{equation}\label{eq:alpha-epsilon}
    \maxConn_{\zeta=\epsilon/4 w_{\rm bit}}(\vec \Sigma_0) < \tfrac{1}{2}\delta(n).
\end{equation}
which implies in particular that
\begin{equation}
    \zeta\abs{\Gamma(S)} = \frac{\epsilon}{4 w_{\rm bit}} \abs{\Gamma(S)}\geq \abs{\vec \Sigma_0 \wedge \vec\Gamma(S)}
    \label{eq:zeta_percolation_bottleneck}
\end{equation}
where $w_{\rm bit}$ is (an upper bound on) the bit-degree, $\Gamma(S)$ is the neighborhood of $S$, i.e. the set of checks incident on at least one bit in $S$, and by construction $\abs{\Gamma(S)} \geq \abs{S} > \tfrac{1}{2}\delta(n)$. In an abuse of notation, we here and below sometimes identify sets of qubits with vectors in $\mathbb{F}_2^n$.

Together, this allows us to lower bound the energy gained by setting all bits on $S$ to zero.
First, note that by \cref{lem:maxconn_alpha2} the syndromes of $\vec y\vert_S$ and $\vec y\vert_{S^c}$ are independent, and thus
\begin{align}\label{eq:deltaE-y}
    E(\vec x_0 + \vec y) - E(\vec x_0 + \vec y')
        &= \abs{H\vec y} - \abs{H\vec y'} - 2 \left(\abs{\vec \Sigma_0 \wedge H \vec y} - \abs{\vec \Sigma_0 \wedge H \vec y'} \right)\\
        &= \abs{H(\vec y\vert_S)} - 2 \abs{\vec \Sigma_0 \wedge H(\vec y\vert_S)} \\
        &\geq \gamma \norm{(\vec y \vert_S)} - 
        2 \abs{\vec \Sigma_0 \wedge H(\vec y\vert_S)},
\end{align}
where in the last step we used that, by \cref{lem:maxConn_alpha_maxConn}, $\maxConn(\vec y\vert_S\reduced) < \delta(n)$ and hence we can use expansion to lower bound the energy.
Now, we are finally able to conclude that 
\begin{align}
    E(\vec x_0 + \vec y) - E(\vec x_0 + \vec y')
        &\geq \gamma \norm{(\vec y \vert_S)} - 
        2 \abs{\vec \Sigma_0 \wedge H(\vec y\vert_S)} \\
        &\geq \frac{\gamma}{2} \abs{S} - 2 \abs{\vec \Sigma_0 \wedge \Gamma(S)} \\
        &\geq \frac{\gamma}{2} \abs{S} - \frac{\epsilon}{2 d_{\rm bit}}\abs{\Gamma(S)}\\
    &\geq \tfrac{1}{2}(\gamma - \epsilon) \abs{S}.
\end{align}
going from the first to the second line uses that $S$ is an $\alpha=1/2$ subset of $\vec y\reduced$, as well as the fact that $\vec y\vert_S$ cannot trigger more than all checks adjacent to $S$.
The third line uses \cref{eq:zeta_percolation_bottleneck}, and the final step uses that $\abs{\Gamma(S)} \leq d_{\rm bit} \abs{S}$.

We have shown that for any state $\vec x_0 + \vec y \in \partial \Omega$, we can ``remove'' all flips of $\vec y$ in its largest connected $\alpha=1/2$-subset  (i.e. set $\vec y|_S = 0$) and lower the energy by at least a finite fraction of $\abs{S}$.

We can now iterate this procedure. Let us relabel $S \to S_1$. If $\vec x_0 + \vec y' \in \Omega$, we stop. Otherwise, we remove the largest connected $\alpha=1/2$-subset of $\vec y'$, denoted by $S_2$, to get a new vector $\vec y''$. We can repeat this procedure until we arrive at a vector $\vec x_0 + \vec z(\vec y) \in\Omega$, where $\vec z(\vec y)$ is the vector obtained from $\vec y$ by removing all the connected $\alpha=1/2$-subsets $S_i$ that are bigger than $\tfrac{1}{2}\delta(n)$.
Since in every step $i$, we lower the energy by a finite fraction $\tfrac{\gamma-\epsilon}{2}$ of $\abs{S_i}$, the total energy difference fulfills
 \begin{align}
 \label{eq:deltaE_reduction}
 	E(\vec x_0 + \vec y) - E(\vec x_0 + \vec z(\vec y)) 
 		&\geq \tfrac{1}{2}(\gamma - \epsilon)\abs{S}
 \end{align}
 where $S = \uplus_i S_i$ is the (disjoint) union of all $S_i$. 

This concludes the first part of the proof. Note that we achieved a \emph{linear} energy barrier in the sense of showing that each state in $\partial\Omega$ has a state in $\Omega$ from which it differs on at most $\abs{S}$ sites, while its energy is higher by an amount proportional to $\abs{S}$.
Since the entropy, that is the number of states in $\partial \Omega$ that map to the same $\Omega$, can at most be linear in $\abs{S}$ as well, we can at this point already be sure that $\partial \Omega$ will indeed be a bottleneck at low temperature. 
We derive a concrete upper bound on the relative degeneracy in the following, before finally bounding the bottleneck ratio.

\vspace{1em}
{\bf Bounding the relative entropy of states in $\partial \Omega$ and $\Omega$}
\\[0.5em]
Naturally, the above mapping $\vec z(\vec y)$ is many-to-one: many $\vec y$ are mapped to the same $\vec z$ (see \autoref{fig:y_to_z_mapping}). Below, we derive an upper bound for this relative degeneracy, fixing only the total size of the removed clusters, $\abs{S}$.

If the iteration procedure needs $R$ steps to reach $\vec z(y)$ from $\vec y$, and at each step we remove a set of size $\abs{S_i}$ then the number of $\vec y$ corresponding to the same $\vec z$ is at most
\begin{align}
 	\mathcal N(\{\abs{S_i}\})&\leq n \Phi^{\abs{S_1}} \cdot (n - \abs{S_1})\Phi^{\abs{S_2}}\cdot \dots \\
 		&\leq \Phi^{\abs{S}} n^R \\
 		&\leq \Phi^{\abs{S}} n^{2\abs{S} / \delta(n)}.
\end{align}
In the first line, we used Corollary 28 of Ref. \onlinecite{Fawzi2017efficient}, which upper bounds the total number of connected clusters of size $L$ in a graph with $n$ vertices and degree at most $w$ by $n\Phi^{L}$, with $\Phi=(w-1)(1+\tfrac{1}{w-2})^{w-2}$.
In the last line we have used that by definition $\abs{S_i} > \delta(n)/2$ and hence $R < 2 \abs{S} / \delta(n)$.

For a given $\abs{S}$, we can further upper bound the total number of sequences of sizes $\abs{S_i}$ by the number of integer partitions of $\abs{S}$, which is in turn upper bounded by $\Pi(\abs{S}) <  e^{\pi \sqrt{2\abs{S}/3}}$ \cite{erdos1942partition}.
The total relative degeneracy is then upper bounded by
\begin{align}
	\mathcal N(\abs{S}) &\leq \exp(
	\pi\sqrt{\tfrac{2}{3}\abs{S}}
	+\abs{S}\log \Phi
	+2\abs{S} \tfrac{\log n}{\delta(n)}) \\
	&\leq \exp[\left(\pi + \log\Phi + 2\tfrac{\log n}{\delta(n)}\right)\abs{S}]
    \label{eq:rel_degeneracy}
\end{align}
which depends only on the total size of all clusters removed in the reduction procedure, $\abs{S}$.

\vspace{1em}
{\bf Bounding the bottleneck ratio}
\\[0.5em]
Putting the lower bound on the energy barriers [\cref{eq:deltaE_reduction}] and the upper bound on the relative degeneracy [\cref{eq:rel_degeneracy}] together, we can now conclude that
\begin{align}
	\frac{\muG[\partial\Omega]}{\muG[\Omega]} &= 
		\frac{\sum_{\vec x_0 + \vec y\in \partial\Omega} e^{-\beta E(\vec x_0 + \vec y)}}
			{\muG[\Omega]} \\
			&\leq \frac{1}{\muG[\Omega]}\sum_{\vec x_0 + \vec y \in \partial \Omega} e^{-\beta [E(\vec x_0 + \vec z(\vec y)) + \tfrac{1}{2}(\gamma - \epsilon)\abs{S(\vec y)}]} \\ 
			&\leq \frac{\sum_{\vec w\in\Omega} e^{-\beta E(\vec w)}\sum_{\abs{S}=\delta(n)/2}^n\mathcal N(\abs{S})  e^{-\beta \frac{\gamma - \epsilon}{2}\abs{S}}}{\sum_{\vec w\in\Omega} e^{-\beta E(\vec w)}}  \\
			&\leq \sum_{\abs{S}=\delta(n)/2}^\infty
				e^{[\pi + \log \Phi+2\log n/\delta(n)-\beta (\gamma - \epsilon)/2]\abs{S}} \\
			&\leq e^{[1 + \log \Phi+2\log n/\delta(n)-\beta (\gamma - \epsilon)/2]\delta(n)/2}
\end{align}
In the second line, we have used the lower bound on the relative energy difference between matched states $\vec z(\vec y)$ and $\vec y$ [\cref{eq:deltaE_reduction}].
In the third and fourth line, we have used the upper bound on the number of states $\vec y$ reduced to the the same $\vec z$, summing over all possible reduction sizes $\abs{S}$.
In the last line, we used that at sufficiently low temperature, the sum over $\abs{S}$ is a convergent geometric series (starting from $\delta(n)/2$).
In particular, for sufficiently large $\beta$, since $\log n/\delta(n) \to 0$ by assumption, the last line vanishes exponentially in $\delta(n)$ as $n\to\infty$ and the result follows.
\end{proof}

\section{The configurational entropy of the Gibbs state is finite and grows with temperature}\label{suppl:Sconf}

In \cref{suppl:ergo_breaking}, we have established the existence of Gibbs state components, satisfying a bottleneck condition, around typical low-energy eigenstates $\ket{\vec x_0, \vec z_0}$. 
By virtue of the bottleneck theorem \cref{thm:bottleneck} and the discussion in \cref{suppl:gibbs_states}, each of these subspaces hosts at least one stable ergodic component of the Gibbs state.
In this section, we want to upper bound the total weight carried by any such component, which in turn provides a lower bound on the configuration entropy of the global Gibbs state as defined in \cref{suppl:gibbs_states}.

\subsection{Lower bounding the configurational entropy using the weight of typical components}

As discussed in \autoref{suppl:gibbs_states}, for the stabilizer codes we consider we can relate the weight of extremal Gibbs state components to the decomposition in \autoref{eq:StructureTheorem} as $w_i \equiv \tr(P_{\mathcal{V}_i}\rhoG) = 2^{-k}p_i$. The configurational entropy, defined in \autoref{eq:conf_ent_vN}, then takes the form
\begin{equation}
    \Sconf(\beta) = -\sum_i p_i \log p_i - \sum_i p_i \tr(\mu_i \log \mu_i)
    = 2^k \sum_i -w_i \log w_{i}.
\end{equation}

Further, the extremal-state decomposition is in terms of spans of eigenvectors, i.e. the projectors $\mathcal V$ are diagonal in the eigenbasis of $\hamil$. In this case, \cref{thm:botteleneck2} guarantees that for qLDPC codes with linear confinement 
\begin{equation}
    2^k\sum_{\text{typical}~i} \tr P_{\mathcal V_i} \rhoG 
    = 2^k\sum_{\text{typical}~i} w_{i}
    \geq 1 - e^{-\Omega(\delta(n))}
    \label{eq:sum_typical_pi}
\end{equation}
where the sum goes over ``typical'' extremal components with associated weights $w_{i}$ within one symmetry sector, which are supported on subspaces of the form defined in \autoref{eq:gibbs_subspaces}. By \autoref{eq:sum_typical_pi}, we then know that 
\begin{align}
    \Sconf(\beta) &= -2^k\sum_i w_i \log w_i \\
        &= -2^k\sum_{\text{rare}~j} w_j \log w_j
        - 2^k\sum_{\text{typical}~i} w_i \log w_i \\
        &\geq - 2^k\sum_{\text{typical}~i} w_i \log w_i \\
        &\geq \left(2^k\sum_{\text{typical}~i} w_i  \right) \log(\frac{1}{w_{\rm max}^{(\rm typ)}}) \\
        &\geq \left( 1 - e^{-\Omega(\delta(n))} \right) \log(\frac{1}{w_{\rm max}^{(\rm typ)}})
\end{align}
where $w_{\rm max}^{(\rm typ)}$ is the maximum weight of any ``typical'' component.
The above implies in particular, that $\forall \kappa >0$, $\exists n^* < \infty$ such that $\forall n > n^*$ :
\begin{equation}\label{eq:lower_bound_sconf_pmax}
    \Sconf > (1-\kappa) \log(\frac{1}{w_{\rm max}^{(\rm typ)}})
\end{equation}

\subsection{Upper bounding the weight of typical components}

Following the discussion above, to lower bound the asymptotic scaling of $\Sconf$, it thus suffices to \emph{upper} bound the weight of typical components 
\begin{equation}
    w_{\rm max}^{(\rm typ)} = \tr \mathcal V_{\rm typ} \rhoG 
        = Z^{-1} \tr \mathcal P_{\mathcal V_{\rm typ}} e^{-\beta \mathcal H}
    \label{eq:p_max_typ}
\end{equation}
where $Z = \tr e^{-\beta \hamil}$ is the partition function, and $\mathcal V_{\rm typ}$ is a subspace as defined in \autoref{eq:gibbs_subspaces}.

Assuming the absence of redundancies, we can derive such an upper bound by only using linear confinement. In particular, we show the following

\begin{theorem}\label{thm:upper_bound_typical_pi}
    Consider a qLDPC code with $(\delta, \gamma)$-confinement, defined by two full rank parity check matrices $H_X$ and $H_Z$, and associated Hamiltonian~$\hamil$. 
    There exists $\beta^* \in \mathbb{R}_+$, such that for all $\beta > \beta^*$, 
    for a eigenstate $\ket{\vec x_0, \vec z_0}$ chosen randomly form the Gibbs ensemble $\rhoG = Z^{-1} e^{-\beta \hamil}$, with high probability $p \geq 1 - e^{-\Omega(\delta(n))}$, the Gibbs state supported on the subspace $\mathcal V(\vec x_0, \vec z_0)$, defined in \autoref{eq:gibbs_subspaces}, contains only an exponentially small fraction of the weight: $\forall \kappa > 0$, $\exists n^* < \infty$ such that $\forall n > n^*$
    \begin{equation}
        \log\tr P_{\mathcal V}\rhoG \leq -(1+\kappa)[r + f(T)]\,n
    \end{equation}
    where $r = k/n$ is the rate of the code, and
    \begin{equation}
        f(T) = 
        (1-r)\left[
        \log(1 + e^{-1/T}) 
        + \frac{1}{T(1+e^{1/T})} 
        \right]
        - \log\Upsilon\left(\tfrac{2 (1-r)}{\gamma(1+e^{1/T})}\right)
    \end{equation}
    with $\Upsilon(\rho) = \rho^{-\rho} (1-\rho)^{\rho-1}$.
\end{theorem}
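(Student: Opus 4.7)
The plan is to factorize the problem along the $X$ and $Z$ sectors and reduce it to a classical counting problem. Because the CSS Hamiltonian splits into commuting $X$- and $Z$-type terms, and because $\mathcal V(\vec x_0,\vec z_0)$ is defined by independent constraints on the $X$- and $Z$-fluctuations, the Gibbs weight factorizes as
\begin{equation}
    \tr P_{\mathcal V}\rhoG = \mu_{\rm G}^{(X)}[\Omega(\vec x_0)]\cdot \mu_{\rm G}^{(Z)}[\Omega(\vec z_0)],
\end{equation}
analogous to the rewriting of the bottleneck ratio in \cref{eq:bottleneck_ratio_classical}. By the symmetry between the $X$ and $Z$ sectors, it suffices to upper bound one factor $\mu_{\rm G}[\Omega(\vec x_0)]$ associated to a single classical parity check matrix $H\in\mathbb F_2^{m\times n}$.

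The next step is to compute the partition function exactly and bound the numerator. Under the no-redundancy hypothesis $\rank H = m$, every syndrome in $\mathbb F_2^m$ is attained by exactly $2^{n-m}$ bit strings, so $Z = 2^{n-m}(1+e^{-\beta})^m$. For the numerator $\sum_{\vec x \in \Omega(\vec x_0)} e^{-\beta|H(\vec x_0+\vec x)|}$, I would expand $|H(\vec x_0+\vec x)| = |H\vec x_0| + |H\vec x| - 2|\vec\Sigma_0\wedge H\vec x|$ with $\vec\Sigma_0 := H\vec x_0$, and combine three facts about a typical reference $\vec x_0$: (i) by Chernoff applied to the i.i.d.\ Bernoulli syndrome (\cref{eq:syndrome_iid}), $|H\vec x_0|$ concentrates to $m p$ with $p = 1/(1+e^\beta)$; (ii) by \cref{thm:alpha_percolation}, $\vec\Sigma_0$ is $\zeta$-sparse for any $\zeta>0$ at sufficiently large $\beta$, whence $|\vec\Sigma_0\wedge H\vec x|\leq \zeta\,\wbit\norm{\vec x}$ with probability $1-e^{-\Omega(\delta(n))}$; and (iii) by construction, any $\vec x$ with $\maxConn_{1/2}(\vec x)\leq \delta(n)/2$ has connected components of reduced weight at most $\delta(n)$ (by \cref{lem:maxConn_alpha_maxConn}), so linear confinement yields $|H\vec x|\geq \gamma\norm{\vec x}$ component-wise.

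Combining these, the numerator is bounded by $e^{-\beta m p}\cdot 2^{n-m}\sum_L N(L)\, e^{-\beta(\gamma - 2\zeta \wbit) L}$, where $N(L)$ counts equivalence classes of $\vec x$ with reduced weight $L$ modulo $\Im(H^T)$ and the factor $2^{n-m}$ absorbs the multiplicity of representatives inside $\Omega(\vec x_0)$. Using $N(L)\leq \binom{n}{L}\leq \Upsilon(L/n)^n$, a saddle point in $\rho = L/n$ picks out $\rho^\star = \Theta\bigl((1-r)/[\gamma(1+e^\beta)]\bigr)$ and produces the $-\log\Upsilon$ term of $f(T)$; the remaining piece $(1-r)[\log(1+e^{-\beta}) + \beta p]$ arises from $-\log(Z/2^n)$ combined with the typical reference energy $\beta m p$, using $(m_X+m_Z)/n = 1-r$. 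The main obstacle is bookkeeping several sub-leading corrections --- the $\zeta$-percolation loss in the effective confinement constant $\gamma - 2\zeta\wbit$, the Chernoff deviations of $|H\vec x_0|$ away from $m p$, the multiplicity factor $2^{n-m}$, and the discrete-to-continuous saddle-point approximation --- so that they can all be absorbed into the factor $(1+\kappa)$ uniformly in $T$ above a fixed threshold $\beta^*$, while ensuring that the resulting $f(T)$ takes the stated clean form and is monotonic in temperature.
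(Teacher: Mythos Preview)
Your setup is sound---the factorization along $X$/$Z$ sectors, the exact partition function under the no-redundancy hypothesis, and the use of confinement to control $|H\vec x|$ componentwise are all correct and match the paper. However, two of your key steps do not go through as stated, and the paper takes a different (simpler) route precisely to avoid them.

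\textbf{Step (ii) is unjustified.} The $\alpha$-percolation bound (\cref{thm:alpha_percolation}) says that with high probability $\vec\Sigma_0$ has no connected $\zeta$-subset of size $\geq t$; equivalently, $|\vec\Sigma_0\cap S|\leq \zeta|S|$ only for \emph{connected} sets $S$ with $|S|\geq t$. But a generic fluctuation $\vec x\in\Omega(\vec x_0)$ is a union of arbitrarily many small components, so $\Gamma(\vec x\reduced)$ need not be connected or large, and no uniform bound $|\vec\Sigma_0\wedge H\vec x|\leq\zeta\wbit\norm{\vec x}$ follows. (In the bottleneck proof this was not an issue because there the cross-term was only needed for the single large cluster being removed.) The paper sidesteps this entirely by using the \emph{trivial} bound $|\vec\Sigma_0\wedge H\vec x|\leq|\vec\Sigma_0|=E_{\rm ref}$, which holds for every $\vec x$.

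\textbf{The saddle point does not land where you say.} Even granting (ii), the saddle of $\sum_L\binom{n}{L}e^{-\beta\gamma'L}$ sits at $\rho^\star=1/(1+e^{\beta\gamma'})$, not at $2(1-r)/[\gamma(1+e^{\beta})]$; for large $\beta$ the former is $\sim e^{-\beta\gamma}$ while the latter is $\sim 2(1-r)e^{-\beta}/\gamma$. So your route would produce a different (and not obviously equivalent) bound, not the stated $f(T)$.

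The paper's argument is: use Hoefferding to place $\ket{\vec x_0,\vec z_0}$ in an energy shell $\Xi$ of width $O(\sqrt n)$ around $\varepsilon n$, then bound $\tr P_{\mathcal V}e^{-\beta\hamil}\lesssim|\mathcal V\cap\Xi|\,e^{-\beta\varepsilon n}$. For the count, the trivial cross-term bound plus confinement gives that any eigenstate in $\mathcal V$ with energy $\leq\varepsilon n+O(\sqrt n)$ must satisfy $\norm{\vec x}+\norm{\vec z}\leq 2\varepsilon n/\gamma$, so $|\mathcal V\cap\Xi|$ is bounded by a single Hamming-ball volume $S_n(2\varepsilon/\gamma)\leq\Upsilon(2\varepsilon/\gamma)^n$---no saddle point is performed, and the argument $2\varepsilon/\gamma=2(1-r)/[\gamma(1+e^{1/T})]$ of $\Upsilon$ arises directly from the shell radius, which is why it appears in $f(T)$.
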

\begin{proof}

Let us first sketch the proof strategy. 
We want to upper bound $\tr \mathcal V_{\rm typ} \rhoG = Z^{-1} \tr \mathcal P_{\mathcal V_{\rm typ}} e^{-\beta \mathcal H}$.
First, the denominator $Z^{-1}$ can be computed exactly.
For the numerator, we first use Hoefferding's inequality to relate the quantity $\tr \mathcal V_{\rm typ} e^{-\beta \hamil}$ to the number of states in $\mathcal V_{\rm typ}$ below a certain energy cutoff, and then use linear confinement to upper bound that number using a simple counting argument.

\vspace{1em}
{\bf Partition function}
\\[0.5em]
We can write the partition function in closed form as
    \begin{equation}
        Z(\beta) = \sum_{\ell} e^{\beta E_{\ell}}
        = 2^k \sum_{\vec s} e^{-\beta \abs{\vec s}}
        = 2^k \left(1+e^{-\beta}\right)^m
        = 2^{rn} \left(1+e^{-\beta}\right)^{(1-r)n}
    \label{eq:partition_function_no_redundancies}
    \end{equation}
    where $E_{\ell}$ are the eigenenergies of $\hamil$, which are given by the Hamming weight of the corresponding syndrome $\vec s\in\mathbb{F}_ 2^m$, and because there are no redundancies the partition function reduces to a sum over all binary vectors of length $m$, where $m$ is the number of stabilizers of the code. As above, $r=k/n$ denotes the rate of the code.

\vspace{1em}
{\bf Upper bounding the projecton of the Boltzmann factor onto typical subspaces}
\\[0.5em]
Consider choosing a random state $\ket{\vec x_0, \vec z_0}$ from the Gibbs ensemble $\rhoG = Z^{-1} e^{-\beta \hamil}$.
By \cref{thm:botteleneck2}, we can choose $\beta^*$ such that for $\beta > \beta^*$ the Gibbs states supported on $\mathcal V(\vec x_0, \vec z_0)$ are stable.

In the absence of redundancies, the energy of a random eigenstate is the sum of i.i.d variables $s_j \in \{0, 1\}$. In this case Hoefferding's inequality states that
\begin{equation}\label{eq:hoefferding}
    {\rm Prob}\left(\abs{E - \expval{E}} \geq t \right) \leq 2 e^{-2t^2}.
\end{equation}
and hence we can further choose $c_1$ such that with probability $p = 1 - e^{-c_1 \delta(n)}$ we have also $\ket{\vec x_0, \vec z_0} \in \Xi$ where $\Xi$ is defined as an energy shell of with $\sqrt{\xi n}$ around the mean $\expval{E} := \varepsilon n$
\begin{equation}\label{eq:energy_shell}
    \Xi := \linspan\left\{
    \ket{\ell} \in \hilbert, \hamil\ket{\ell} = E_{\ell} \ket{\ell} ~\vert~ 
    \expval{E}_{\beta} - \sqrt{\xi n} \leq 
    E_{\ell}
    \leq \expval{E}_{\beta} + \sqrt{\xi n}
    \right\}.
\end{equation}

We can further conclude that 
\begin{align}
    \tr \mathcal V_{\rm typ} e^{-\beta \hamil} 
    = \sum_{\ket{\ell}\in\mathcal V} e^{-\beta E_{\ell}} 
    \leq \sum_{\ket{\ell} \in \mathcal V \cap \Xi} e^{-\beta E_{\ell}} + 2 e^{-2\xi n} 
    \leq \abs{\mathcal V \cap \Xi}\,e^{-\beta \expval{E}_{\beta} + \beta \sqrt{\xi n}} + 2 e^{-2\xi n}
\end{align}
where we have dropped the explicit dependence of $\mathcal V(\vec x_0, \vec z_0)$ on $\vec x_0$, $\vec z_0$ for brevity, and $\abs{\mathcal V \cap \Xi}$ denotes the dimension of the subspace $\mathcal V \cap \Xi$.

Taking the logarithm on both sides yields
\begin{align}
    \log \tr \mathcal V e^{-\beta \hamil} 
    &\leq
    \log \abs{\mathcal V \cap \Xi} - \beta \varepsilon n + \beta \sqrt{\xi n} + \log(1 + \frac{2e^{-\xi n}}{\abs{\mathcal V \cap \Xi}e^{-\beta \expval{E} + \beta\sqrt{\xi n}}}) \\
    &\leq
    \log \abs{\mathcal V \cap \Xi} - \beta \varepsilon n + \beta \sqrt{\xi n} + \frac{2e^{-2\xi n}}{\abs{\mathcal V \cap \Xi}e^{-\beta \expval{E} + \beta\sqrt{\xi n}}} \\
    &\leq
    \log \abs{\mathcal V \cap \Xi} - \beta \varepsilon n + \beta \sqrt{\xi n} + 2 e^{-\beta(2\xi-\varepsilon)n}
    \label{eq:log_weight_typ_subspace}
\end{align}
where the last error term $e^{-\beta (2\xi - \varepsilon)n}$ is exponentially small in $n$ for $\xi > \varepsilon / 2$.

We now need to upper bound the number of states in $\mathcal V \cap \Xi$. To this end, we define
\begin{equation}
   \Lambda := \linspan\left\{
    \ket{\ell} \in \hilbert, \hamil\ket{\ell} = E_{\ell} \ket{\ell} ~\vert~ 
    E_{\ell}
    \leq \expval{E}_{\beta} + \sqrt{\xi n}
    \right\}
\end{equation}
where naturally $\abs{\mathcal V \cap \Xi} \leq \abs{\mathcal V \cap \Lambda}$.

Consider the energy of $\ket{\ell} := \ket{\vec x_0 + \vec x, \vec z_0 + \vec z} \in \mathcal V$ relative to that of $\ket{\ell_{\rm ref}} = \ket{\vec x_0, \vec z_0}$, where $\ket{\ell_{\rm ref}}$ is the eigenstate with respect to which $\mathcal V$ is defined (cf. \autoref{eq:gibbs_subspaces})

Since $\maxConn(\vec x\reduced) \leq \maxConn_{\alpha=1/2}(\vec x\reduced) \leq \tfrac{1}{2}\delta(n)$, we know that $\abs{H_Z\vec x} > \gamma \norm{\vec x}$, and the equivalent statement is also true for $\vec z$.
This implies in particular that $\abs{H_z \vec x} + \abs{H_x \vec z} > 2 \varepsilon n$ if $\norm{\vec x} + \norm{\vec z}> 2\varepsilon n/\gamma$ and hence 
\begin{equation}
    E_{\ell} - E_{\rm ref} = \abs{H_Z\vec x} + \abs{H_X\vec z} - 2\abs{H_Z \vec x_0 \wedge H_Z \vec x} - 2\abs{H_X \vec z_0 \wedge H_X \vec z} > 0.
\end{equation}
Since $\ket{\ell_{\rm ref}} \in \Xi$, and hence $E_{\rm ref}/ n \to \varepsilon$ as $n\to\infty$, we can upper bound $\abs{\mathcal V \cap \Lambda}$ in terms of the volume of a Hamming Ball:
\begin{equation}
    S_n(\rho) = \sum_{l=0}^{\rho n} {n \choose l}
\end{equation}
such that 
\begin{equation}
    \abs{\mathcal V \cap \Lambda} 
    \leq S_n\left(\frac{2 \varepsilon}{\gamma}\right).
\end{equation}
We will use the following upper bound \cite[Lemma 3.3]{worsch1994_binomial_bounds} for $S_n(\rho)$: $\forall \kappa >0$, $\exists n^*$, such that for $\rho n > n^*$
    \begin{equation}
        S_n(\rho) 
        \leq
        (1+\kappa)\frac{1}{\sqrt{2\pi}}\sqrt n \sqrt{\frac{\rho}{1-\rho}} \Upsilon(\rho)^n
    \end{equation}
with $\Upsilon(\rho) = \rho^{-\rho} (1-\rho)^{\rho-1}$.

Plugging this into \autoref{eq:log_weight_typ_subspace}, we obtain that $\forall \kappa > 0$, $\exists n^* < \infty$ such that $\forall n > n^*$
\begin{align}
    \log \tr \mathcal V e^{-\beta \hamil} 
    &\leq
    (1 + \kappa)\,n \left( \log\Upsilon\left(\tfrac{2 \varepsilon}{\gamma}\right)  - \beta \varepsilon\right)
    \label{eq:upper_bound_expH_projected}
\end{align}

\vspace{1em}
{\bf Lower-bounding the configurational entropy}
\\[0.5em]
Combining \autoref{eq:partition_function_no_redundancies} and \autoref{eq:upper_bound_expH_projected} then yields the fact that $\forall \kappa > 0$, $\exists n^* < \infty$ such that $\forall n > n^*$
    \begin{equation}
    \log \tr P_{V(\vec x_0, \vec z_0)} \rhoG \leq -(1+\kappa)n \biggl(
        r + \underbrace{
        (1-r)\log(1+e^{-\beta}) 
        + \beta \varepsilon
        -\log\Upsilon\left(\tfrac{2 \varepsilon}{\gamma}\right)
        }_{:= f(\beta)}
    \biggr)
    \end{equation}
where $\Upsilon(\rho) = \rho^{-\rho} (1-\rho)^{\rho-1}$.

To write $f(T)$ explicitly as a function of only the temperature, we compute the expectation value of the energy explicitly
\begin{equation}
    \varepsilon = \frac{1}{n}\expval{E}_\beta
        =\frac{1-r}{1+e^{\beta}}
\end{equation}
so that with $T=1/\beta$
\begin{equation}
    f(T) = 
        (1-r)\left[
        \log(1 + e^{-1/T}) 
        + \frac{1}{T(1+e^{1/T})} 
        \right]
        - \log\Upsilon\left(\tfrac{2 (1-r)}{\gamma(1+e^{1/T})}\right).
\end{equation}
    
\end{proof}

Note that using \autoref{eq:lower_bound_sconf_pmax}, the above in particular implies that $\forall \kappa > 0$, $\exists n^* < \infty$ such that $\forall n > n^*$
\begin{align}
    \Sconf &\geq 
    (1-\kappa)\,n \left( 
    r + (1-r)\log(1+e^{-\beta}) 
    + \beta \varepsilon -\log\Upsilon\left(\tfrac{2 \varepsilon}{\gamma}\right) 
    \right).
\end{align}
which we can also again write explicitly as a function of only the temperature
\begin{align}\label{eq:sconf_lower_bound}
    \Sconf
    &\geq (1-\kappa)\,n \left( 
    r + (1-r)\left(\log(1+e^{-\beta}) 
    + \frac{\beta}{1+e^{\beta}}
    \right) 
    -\log\Upsilon\left(\tfrac{2 (1-r)}{\gamma(1+e^{\beta})}\right) 
    \right).
\end{align}
with as before $\Upsilon(\rho) = \rho^{-\rho} (1-\rho)^{\rho-1}$.
We show this lower bound on the configurational entropy in \autoref{fig:sconf_lower_bound}, both as a function of expansion parameter $\gamma$ for a fixed temperature and a range of code rates $r$, and as a function of temperature, at fixed code rate and a range of expansion parameters $\gamma$.
While the bound is only nontrivial at sufficiently large $\gamma$, the values necessary to ensure a nontrivial lower bound are not very large, and can easily be achieved using known constructions (see \autoref{suppl:proof_profit}).

\begin{figure}
    \centering
    \includegraphics{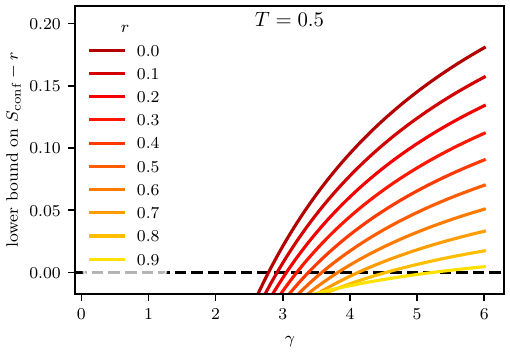}
    \includegraphics{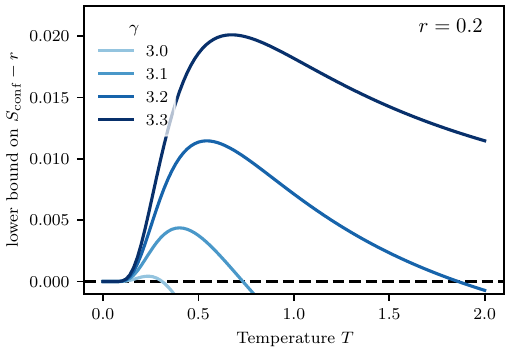}
    \caption{Lower bounds on the the configurational entropy \emph{per symmetry sector}, $\Sconf$, as derived from expansion in \autoref{eq:sconf_lower_bound}
    Left: lower bound for fixed temperature $T$ as a function of expansion parameter $\gamma$ and different code rates $r=k/n$. 
    Right: lower bound for fixed code and expansion parameter, as a function of temperature $T$.
    }
    \label{fig:sconf_lower_bound}
\end{figure}

%
%
%
%

\section{Typical extremal Gibbs states are non-separable\label{suppl:topo_order}}

In this section, we show that for any finite-rate qLDPC code Hamiltonian with linear $(\delta, \gamma)$-confinement, the Gibbs state components $\rhoGV$ around typical eigenstates are long-range entangled (LRE). In fact, we will prove an even stronger statement, that is that the subspace $\mathcal V$ associated to a typical component does not contain short-ranged entangled (SRE) states.

The distinction between long- and short-ranged entanglement here is based on circuit complexity. We call a pure state $\ket{\psi}$ short range entangled (SRE) if it can be prepared using a finite-depth circuit circuit of local (i.e. acting on a finite number of sites) unitary gates, and we call the state long ranged entangled (LRE) otherwise. Similarly, a mixed state $\hat\rho$ is called short-range entangled if it can be approximated by a classical mixture of SRE pure states, i.e. when $\hat\rho\approx\sum_j p_j \ketbra{\phi_j}{\phi_j}$ for a set of SRE states $\{\phi_j\}$, and we call it long range entangled otherwise.

The proof strategy, which was already sketched in the man text but is repeated here for completeness, is as follows. By construction, the space $\mathcal V$ associated to a typical component as defined in \cref{eq:gibbs_subspaces} is well-separated by its image under a logical operator. In particular, define $\mathcal V_{ \ell}:=\hat L_{\ell}\mathcal V$, where $\hat L_{\ell}$ $\ell=1\dots 2^k$ are the logical operators of the code including logical identity. 
Note that by the definition of $\mathcal V$ in \cref{eq:gibbs_subspaces}, and \cref{lem:maxConn_alpha_maxConn}, and the fact that $\maxConn(\hat L) > \delta(n)$, we know that $\bra{\psi_{l}} \hat{\mathcal O}\ket{\psi_{m}} = \delta_{lm}$ for all Pauli operators with $\norm*{\hat{\mathcal O}} \leq \tfrac{1}{4}\delta(n)$. 
This suffices to show that for any state $\ket{\psi_1} \in \mathcal V$, we can choose $2^k-1$ suitable states $\ket{\psi_{\ell}} \in \mathcal{V}_\ell$, $\ell=2,\dots,2^k$ such that
\begin{equation}
    \mathcal D := \linspan\{ \ket{\psi_{\ell}}, \ell=1,\dots,2^k\}
\end{equation}
defines a code with $k$ logical qubits and distance $d > \tfrac{1}{4}\delta(n)$.

We can then use the following recent result on the circuit complexity of quantum error correcting codes.
\begin{theorem}[Theorem 3 (i) in Ref.\ \cite{bravyi2024entanglement}]\label{thm:codes_lre}
    For any code $\mathcal D$ on $n$ qubits, with constant rate $\dim(\mathcal D)/n \xrightarrow[n\to\infty]{} r >0$ and distance $d$, then any logical state $\ket{\psi}\in \mathcal D$, $\ket{\psi}$ cannot be prepared by a unitary circuit of depth less than $\order{\log_2 d}$. 
\end{theorem}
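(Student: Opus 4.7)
My plan is to prove the circuit-depth lower bound by contrasting a code-theoretic lower bound on bipartite entanglement entropy with a circuit-theoretic upper bound. Any logical state $\ket\psi\in\mathcal D$ must exhibit extensive entanglement across any sufficiently balanced cut of the qubits (forced by the code distance $d$), while any state prepared by a shallow circuit has only limited entanglement across any cut (forced by the lightcone structure of local gates); the resulting inequality yields $D \geq \Omega(\log_2 d)$.

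For the entanglement lower bound I would invoke the Knill--Laflamme cleaning lemma: for a code $\mathcal D$ of distance $d$ and any region $A$ with $|A^c| \geq d$, every logical operator admits a representative supported entirely on $A^c$ (up to stabilizers). Consequently, for any two logical states $\ket\psi,\ket{\psi'}\in\mathcal D$ the reductions onto $A$ agree, and both equal the reduction $\rho_A^{(\mathcal D)}:=\operatorname{tr}_{A^c}(\Pi_{\mathcal D}/\dim\mathcal D)$ of the maximally mixed code state. Choosing a balanced bipartition $A\sqcup A^c=[n]$ with both $|A|,|A^c|\geq d$, the same argument applies to $A^c$, so that $S(\rho_A^{(\psi)})=S(\rho_A^{(\mathcal D)})$ and $S(\rho_{A^c}^{(\psi)})=S(\rho_{A^c}^{(\mathcal D)})$. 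Purity of $\ket\psi$ enforces $S(\rho_A^{(\psi)})=S(\rho_{A^c}^{(\psi)})$, which transfers to the mixed-state reductions, and subadditivity applied to $\rho_{\mathcal D}$ gives $S(\rho_A^{(\mathcal D)})+S(\rho_{A^c}^{(\mathcal D)})\geq S(\rho_{\mathcal D})=\log_2\dim\mathcal D=rn$; hence $S(\rho_A^{(\psi)})\geq rn/2$. For the upper bound, writing $\ket\psi=U\ket{0}^{\otimes n}$ with $U$ a depth-$D$ circuit of 2-local gates, the tensor-network picture bounds the Schmidt rank across the cut $A|A^c$ by $2^{m_A}$, where $m_A$ counts 2-qubit gates straddling the cut, so $S(\rho_A)\leq 2m_A$; and for 2-local gates the causal past of any region $A$ has size $|A^{(D)}|\leq |A|\cdot 2^D$, which controls $m_A$ via the lightcone geometry.

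The main obstacle is the combination step: one must exhibit a bipartition with $|A|,|A^c|\geq d$ whose straddling-gate count satisfies $2m_A<rn/2$, giving the desired contradiction for $D<c\log_2 d$. The naive global gate count $m_A\leq Dn$ yields only $D=\Omega(1)$, so a more refined analysis is required. A productive route is a Heisenberg-picture operator-spread argument: tracking a minimum-weight logical operator $\bar L$ of weight $d$, whose image $U^\dagger\bar L U$ has support at most $d\cdot 2^D$, and exploiting the symplectic structure of the $k=rn$ independent stabilizing logical Paulis of $\ket\psi$ (each of whose Heisenberg images must be a $Z$-type Pauli on $\ket{0}^{\otimes n}$ with support $\leq d\cdot 2^D$) to derive a counting/dimension inequality of the form $d\cdot 2^D\gtrsim n$ (up to logarithmic factors), from which $D\geq\Omega(\log_2 d)$ follows when $d=o(n)$. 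Achieving the tight $\log_2 d$ bound uniformly across the full parameter range $d\leq n$---in particular for good codes with $d=\Theta(n)$---requires the sharper entanglement-distillation machinery developed in Ref.~\cite{bravyi2024entanglement}, which upgrades the above entropy/operator-spread sketch to the quantitative form stated in the theorem.
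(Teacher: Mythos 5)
There is no proof of this statement in the paper for you to match --- it is imported verbatim as Theorem 3(i) of Ref.~\cite{bravyi2024entanglement} --- so your attempt must stand on its own, and as written it does not. The entanglement lower bound at its core is incorrect. Local indistinguishability of code states on a region $A$ (equality of $\rho_A^{(\psi)}$ with the reduction of the maximally mixed code state) follows from the Knill--Laflamme conditions only when $A$ is \emph{correctable}, i.e.\ $\abs{A}\leq d-1$; your condition ``$\abs{A^c}\geq d$'' is neither necessary nor sufficient, and for a balanced cut with $\abs{A},\abs{A^c}\geq d$ the region $A$ generically supports nontrivial logical operators, so the reductions of different code states need not agree. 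Worse, your argument needs \emph{both} $A$ and $A^c$ to be locally indistinguishable, which is impossible: the quantum Singleton bound forces $d\leq (n-k)/2+1 < n/2$ for any $k\geq 1$, so at most one side of a bipartition can be correctable. The conclusion you draw, $S(\rho_A^{(\psi)})\geq rn/2$ across a balanced cut for \emph{every} code state, is in fact false --- already the $[[4,2,2]]$ code (rate $1/2$) has the code state $(\ket{00}+\ket{11})\otimes(\ket{00}+\ket{11})$, which is an exact product across a balanced cut --- so no correct combination with the shallow-circuit entanglement upper bound can rescue this route as stated.

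The fallback in your last paragraph does not close the gap either. The operator-spreading count you sketch would give $d\cdot 2^D\gtrsim n$, i.e.\ $D\gtrsim \log_2(n/d)$, which is a bound in the \emph{wrong direction}: it weakens as $d$ grows and becomes vacuous precisely for good codes with $d=\Theta(n)$, whereas the theorem's bound $\Omega(\log_2 d)$ is strongest there. Finally, deferring the ``combination step'' to ``the sharper entanglement-distillation machinery developed in Ref.~\cite{bravyi2024entanglement}'' is circular, since that is exactly the result you are being asked to prove. To make this work you would need the actual mechanism of that reference (or an equivalent one): roughly, one exploits that a depth-$D$ preparation circuit confines correlations to lightcones of size $2^{O(D)}$, chooses many disjoint regions that are individually correctable (size $<d$) so that local indistinguishability applies to each of them, and plays this off against the extensive entropy $k=rn$ of the maximally mixed code state via a counting argument over regions --- not a single balanced cut. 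As it stands, your proposal has a genuine gap at both the entropy lower bound and the combination step.
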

When applied to the code $\mathcal D$ defined above, it implies that $\ket{\psi_1}$ cannot be prepared by a unitary circuit of depth less than $\order{\delta(n)}$.
Since the choice of $\ket{\psi}\in \mathcal V$ here was arbitrary, however, this already implies that $\mathcal V$ contains \emph{no} SRE states at all.

To construct the code $\mathcal D$, we will use the following Lemma.
\begin{lemma}
    Consider an error correcting code $\mathcal C$, and two sets of operators $\{E_\alpha\}$, $\{F_{\sigma}\}$, such that $E_{\alpha}F_{\sigma}$ is a correctable error for all $\alpha, \sigma$.
    Let $\ket{\ell}$, $\ell=1\dots 2^k$ be an orthogonal basis of the codespace $\mathcal C$.
    Then, the set $\{F_{\sigma}\}$ is correctable in the code $\mathcal D := \linspan\left\{ \ket{\psi_{\ell}} := \sum_{\alpha} c_\alpha E_{\alpha}\ket{\ell} \right\}$ for arbitrary coefficients $c_\alpha$ such that $\sum_\alpha \abs{c_{\alpha}}^2=1$
    \label{lem:superposition_codes}
\end{lemma}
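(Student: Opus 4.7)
The plan is to verify the Knill--Laflamme (KL) conditions for the error set $\{F_\sigma\}$ on the code $\mathcal D$ by reducing them to the KL conditions for the richer set $\{E_\alpha F_\sigma\}$ on $\mathcal C$, which are granted by hypothesis. Concretely, correctability of $\{E_\alpha F_\sigma\}$ on $\mathcal C$ is equivalent to the existence of a Hermitian matrix $A$ such that
\begin{equation}
\bra{\ell}(E_\alpha F_\sigma)^\dagger(E_{\alpha'}F_{\sigma'})\ket{\ell'} = A_{\alpha\sigma,\alpha'\sigma'}\,\delta_{\ell\ell'}
\end{equation}
for all basis states $\ket{\ell},\ket{\ell'}$ of $\mathcal C$, and I would aim to show the analogous identity $\bra{\psi_\ell} F_\sigma^\dagger F_{\sigma'}\ket{\psi_{\ell'}} = C_{\sigma\sigma'}\,\delta_{\ell\ell'}$ on $\mathcal D$, with $C_{\sigma\sigma'}$ independent of $\ell,\ell'$.

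The computational heart is then just inserting the definition $\ket{\psi_\ell}=\sum_\alpha c_\alpha E_\alpha\ket{\ell}$, which gives
\begin{equation}
\bra{\psi_\ell} F_\sigma^\dagger F_{\sigma'} \ket{\psi_{\ell'}} = \sum_{\alpha,\alpha'} c_\alpha^* c_{\alpha'}\,\bra{\ell} E_\alpha^\dagger F_\sigma^\dagger F_{\sigma'} E_{\alpha'} \ket{\ell'}.
\end{equation}
In the setting of this paper all of $E_\alpha,F_\sigma$ are Pauli strings, so $F_{\sigma'} E_{\alpha'} = \eta^{(1)}_{\sigma'\alpha'}\,E_{\alpha'}F_{\sigma'}$ and $E_\alpha^\dagger F_\sigma^\dagger = \eta^{(2)}_{\alpha\sigma}\,F_\sigma^\dagger E_\alpha^\dagger$ for signs $\eta^{(1,2)} \in \{\pm1\}$ that depend only on the commutation relations between the $E$'s and $F$'s and not on $\ell,\ell'$. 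After reordering, each summand becomes $\eta^{(1)}_{\sigma'\alpha'}\eta^{(2)}_{\alpha\sigma}\bra{\ell}(E_\alpha F_\sigma)^\dagger(E_{\alpha'}F_{\sigma'})\ket{\ell'} = \tilde A_{\alpha\sigma,\alpha'\sigma'}\,\delta_{\ell\ell'}$ by the KL hypothesis, with $\tilde A$ a sign-dressed version of $A$. Summing,
\begin{equation}
\bra{\psi_\ell} F_\sigma^\dagger F_{\sigma'} \ket{\psi_{\ell'}} = \Bigl(\sum_{\alpha,\alpha'} c_\alpha^* c_{\alpha'}\,\tilde A_{\alpha\sigma,\alpha'\sigma'}\Bigr)\delta_{\ell\ell'} \equiv C_{\sigma\sigma'}\,\delta_{\ell\ell'},
\end{equation}
which is precisely the KL condition for $\{F_\sigma\}$ on $\mathcal D$.

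The last bookkeeping task is to check that $\{\ket{\psi_\ell}\}$ really is an orthogonal basis of $\mathcal D$ with uniform normalization, so that the matrix-element identity above lifts to the projector form $P_{\mathcal D} F_\sigma^\dagger F_{\sigma'} P_{\mathcal D} = C_{\sigma\sigma'} P_{\mathcal D}$. I would handle this by applying the same reasoning to the inner products $\braket{\psi_\ell}{\psi_{\ell'}} = \sum_{\alpha,\alpha'} c_\alpha^* c_{\alpha'}\bra{\ell}E_\alpha^\dagger E_{\alpha'}\ket{\ell'}$: for Paulis the $\sigma=\sigma'$ special case of the KL hypothesis, combined with $F_\sigma^\dagger F_\sigma=\id$ and commutation up to signs, yields $\bra{\ell}E_\alpha^\dagger E_{\alpha'}\ket{\ell'}\propto\delta_{\ell\ell'}$, whence $\braket{\psi_\ell}{\psi_{\ell'}}\propto\delta_{\ell\ell'}$ with a normalization $\sum_\alpha|c_\alpha|^2=1$. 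The one genuinely delicate point is the operator reordering used to pass from $E_\alpha^\dagger F_\sigma^\dagger F_{\sigma'}E_{\alpha'}$ to $(E_\alpha F_\sigma)^\dagger(E_{\alpha'}F_{\sigma'})$: this is automatic for Pauli operators but, for a fully general statement, would require either commutativity of $\{E_\alpha\}$ and $\{F_\sigma\}$ or the slightly stronger hypothesis that both orderings $\{E_\alpha F_\sigma\}$ and $\{F_\sigma E_\alpha\}$ are simultaneously correctable.
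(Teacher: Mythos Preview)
Your proposal is correct and follows essentially the same approach as the paper: expand $\bra{\psi_\ell}F_\sigma^\dagger F_{\sigma'}\ket{\psi_{\ell'}}$ in terms of the $c_\alpha$ and invoke the Knill--Laflamme conditions for $\{E_\alpha F_\sigma\}$ on $\mathcal C$ to extract the $\delta_{\ell\ell'}$. In fact you are more careful than the paper on the one subtle point, the operator reordering from $E_\alpha^\dagger F_\sigma^\dagger F_{\sigma'} E_{\alpha'}$ to $(E_\alpha F_\sigma)^\dagger(E_{\alpha'}F_{\sigma'})$: the paper simply writes $\bra{\ell}E_\alpha^\dagger F_\sigma^\dagger F_\omega E_\beta\ket{m}=\delta_{\ell m}\gamma_{(\alpha\sigma)(\omega\beta)}$ without comment, whereas you correctly note that this step is automatic for Pauli operators (the only case used in the paper) but would require an additional hypothesis in full generality.
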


Note that the code $\mathcal D$ is not necessarily a CSS, LDPC, or even stabilizer code.
Then, recall that for a general code $\mathcal C$ correctable errors $\{E_{\alpha}\}$ are defined as those fulfilling the Knill-Laflamme conditions \cite{knill1997theory} (see also \cite[Chapter~10]{NielsenChuang2010}):
\begin{equation}
    \bra{\psi_l} E_{\alpha}^\dagger E_{\beta} \ket{\psi_m}
    = \delta_{lm} \gamma_{\alpha\beta}
\end{equation}
where $\ket{\phi_l}, \ket{\psi_m}\in \mathcal C$ are two code words, $\delta_{lm}$ is the Kronecker-delta and $\gamma_{\alpha\beta}$ is a Hermitian matrix. 
Intuitively, the above states that pairs of correctable errors are not allowed to have matrix elements between different code words, but also the code words must be \emph{indistinguishable} by the $E_{\alpha}$: the expectation value of any pair must be identical with respect to all codewords.

\begin{proof}
We explicitly check that the Knill-Laflamme conditions are fulfilled. 
\begin{align}
    \bra{\psi_{l}}F^{\dagger}_\sigma F_{\omega}\ket{\psi_{m}} 
        &= \sum_{\alpha\beta} c_\alpha^*c_\beta \bra{l} E_{\alpha}^\dagger F^{\dagger}_\sigma F_{\omega} E_{\beta} \ket{m} \\
        &= \delta_{lm} \sum_{\alpha\beta} c_\alpha^*c_\beta \gamma_{(\alpha\sigma)(\omega\beta)}\\
        &= \delta_{lm} \Gamma_{(\alpha\sigma)(\omega\beta)}
\end{align}
where in the second line, we used the Knill-Laflamme conditions of the original code $\mathcal C$.
\end{proof}

We are now ready to prove the main result of this section, which will be the following theorem

\begin{theorem}\label{thm:local_minimum_nlts}
    Consider a qLDPC code with finite rate, linear $(\delta, \gamma)$-confinement, and with associated hamiltonian~$\hamil$. 
    Let $\ket{\vec x_0, \vec z_0}$ be an arbitrary eigenstate of $\hamil$ and $\mathcal V(\vec x_0, \vec z_0)$ the subspace defined in \cref{eq:gibbs_subspaces}. 
    Then any state $\ket{\psi}\in\mathcal V(\vec x_0, \vec z_0)$ cannot be prepared by a unitary circuit of depth less than $\order{\log_2 \delta(n)}$
\end{theorem}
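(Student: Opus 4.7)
The plan is to reduce \cref{thm:local_minimum_nlts} to the circuit-complexity bound \cref{thm:codes_lre} for finite-rate codes. Concretely, for any $\ket{\psi}\in\mathcal V(\vec x_0,\vec z_0)$ I will construct a quantum code $\mathcal D$ of rate $r=k/n$ and distance $d_{\mathcal D}>\tfrac14\delta(n)$ that contains $\ket{\psi}$ as a logical state; then \cref{thm:codes_lre} forces the circuit complexity of $\ket{\psi}$ to be at least $\Omega(\log_2 d_{\mathcal D})=\Omega(\log_2\delta(n))$, and arbitrariness of $\ket{\psi}\in\mathcal V$ upgrades this to a statement about the whole subspace (in fact showing, more strongly, that $\mathcal V$ contains no short-range-entangled state at all).

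Setting $\ket{\psi_1}:=\ket{\psi}$, I define $\ket{\psi_\ell}:=\hat L_\ell\ket{\psi_1}$ for $\ell=1,\dotsc,2^k$ using the logical Pauli operators $\hat L_\ell$ of the underlying qLDPC code, and take $\mathcal D:=\linspan\{\ket{\psi_\ell}\}$. Expanding $\ket{\psi_1}=\sum_\alpha c_\alpha E_\alpha\ket{\vec x_0,\vec z_0}$ in the basis of $\mathcal V$ given by Pauli fluctuations $E_\alpha$ whose reduced $X$- and $Z$-supports satisfy $\maxConn_{1/2}\le\tfrac12\delta(n)$ (the defining constraint in \cref{eq:gibbs_subspaces}), and commuting each $\hat L_\ell$ through the $E_\alpha$'s up to Pauli signs, each $\ket{\psi_\ell}$ acquires the superposition-of-errors form required by \cref{lem:superposition_codes} with reference codewords $\hat L_\ell\ket{\vec x_0,\vec z_0}$.

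By \cref{lem:superposition_codes} it then suffices to show that the family $\{E_\alpha F_\sigma\}$, with $F_\sigma$ ranging over Paulis of reduced weight at most $\tfrac14\delta(n)$, is correctable in the original code $\mathcal C$. For a CSS stabilizer code this reduces to showing that every product $E_\alpha^\dagger F_\sigma^\dagger F_\omega E_\beta$ is either a stabilizer or carries a nontrivial syndrome, equivalently is not a nontrivial logical operator. Applying \cref{lem:maxConn_alpha_maxConn} with $\alpha=\tfrac12$, $L=\tfrac12\delta(n)$, and a perturbation set of reduced size at most $L(1-\alpha)=\tfrac14\delta(n)$ coming from $F_\sigma,F_\omega$, the reduced support of the composite has $\maxConn<\delta(n)$ on each of its $X$- and $Z$-parts. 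Linear $(\delta,\gamma)$-confinement rules out a nontrivial logical action for any Pauli whose reduced support has $\maxConn<\delta(n)$, so correctability follows; \cref{lem:superposition_codes} then gives $\dim\mathcal D=2^k$ and $d_{\mathcal D}>\tfrac14\delta(n)$, and since the rate $r=k/n$ is inherited from $\mathcal C$, \cref{thm:codes_lre} closes the argument.

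The main obstacle I anticipate is the combinatorial bookkeeping for $\maxConn_{1/2}$ under products: while each individual $E_\alpha$ has small-cluster reduced support, the product $E_\alpha^\dagger E_\beta$ need not satisfy $\maxConn_{1/2}\le\tfrac12\delta(n)$ a priori, so the application of \cref{lem:maxConn_alpha_maxConn} must be organized carefully --- either by absorbing the $E_\alpha$-factors into stabilizer-reduction steps before applying the lemma (using that for each equivalence class modulo $\Im H_{X/Z}^T$ one works with the smallest representative $\vec x\reduced$ from the outset, as in \cref{sec:suppl_ergo_breaking_setup}), or by arguing directly that the relevant composite always admits a stabilizer-equivalent representative with $\maxConn<\delta(n)$. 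Once this combinatorial step is in place, the rest of the proof is a routine assembly of \cref{lem:superposition_codes,lem:maxConn_alpha_maxConn,thm:codes_lre} with the confinement property of \cref{def:expansion}.
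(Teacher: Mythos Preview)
Your approach is the same as the paper's: embed $\ket{\psi}$ in a rate-$r$ code $\mathcal D$ via \cref{lem:superposition_codes}, establish that $\{E_\alpha F_\sigma\}$ is correctable in the shifted stabilizer code $\mathcal C'=\linspan\{\hat L_\ell\ket{\vec x_0,\vec z_0}\}$ using \cref{lem:maxConn_alpha_maxConn} together with the fact that nontrivial logicals have $\maxConn>\delta(n)$, and finish with \cref{thm:codes_lre}. The paper's proof does exactly this, and in fact is \emph{less} explicit than you are about the correctability step --- it simply asserts ``by construction $E_\alpha F_\sigma$ is correctable'' (relying on the earlier claim that $\overline{\mathcal V}\cap\hat L\,\overline{\mathcal V}=\emptyset$), so the obstacle you anticipate about bookkeeping for products $E_\alpha^\dagger E_\beta$ is real and is treated just as lightly in the paper.

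One small wrinkle in your write-up: defining $\ket{\psi_\ell}:=\hat L_\ell\ket{\psi_1}$ does \emph{not} literally match the hypothesis of \cref{lem:superposition_codes}. Commuting $\hat L_\ell$ through the $E_\alpha$'s produces $\ket{\psi_\ell}=\sum_\alpha s_{\ell,\alpha}c_\alpha E_\alpha\ket{\ell}$ with $\ell$-dependent Pauli signs $s_{\ell,\alpha}=\pm1$, whereas the lemma fixes the \emph{same} coefficients $c_\alpha$ for every $\ell$. With your choice the diagonal Knill--Laflamme matrix $\langle\psi_\ell|F_\sigma^\dagger F_\omega|\psi_\ell\rangle$ can depend on $\ell$ (since $\langle\psi_\ell|P|\psi_\ell\rangle=\pm\langle\psi|P|\psi\rangle$ depending on whether $P$ commutes with $\hat L_\ell$), so $\mathcal D$ need not satisfy the full distance condition needed for \cref{thm:codes_lre}. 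The fix is trivial and is what the paper does: take $\ket{\psi_\ell}:=\sum_\alpha c_\alpha E_\alpha\ket{\ell}$ with the coefficients read off from $\ket{\psi}=\ket{\psi_1}$; then $\ket{\psi}\in\mathcal D$ and \cref{lem:superposition_codes} applies verbatim.
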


\begin{proof}

Recall that the subspace $\mathcal V$ in \cref{eq:gibbs_subspaces} is defined as 
\begin{equation}
    \mathcal V = \linspan \left\{ 
        X^{\vec x} Z^{\vec z} \ket{\vec x_0, \vec z_0} ~\vert~
        \maxConn_{\alpha=1/2}(\vec x\reduced) < \tfrac{1}{2}\delta(n)
        ~\text{and}~
        \maxConn_{\alpha=1/2}(\vec z\reduced) < \tfrac{1}{2}\delta(n)
    \right\}
\end{equation}
Since the code that we started with is a stabilizer code, any set of eigenstates with the same syndrome forms a basis for a code with identical properties. 
In particular, defining a code $\mathcal C' = \linspan\{\ket{\ell} := \hat L_{\ell} \ket{\vec x_0, \vec z_0}, \ell=1\dots 2^k\}$, and two sets of errors
\begin{align}
    \{E_\alpha\} &:= \left\{
        P= X^{\vec x} Z^{\vec z} ~\vert~
        \maxConn_{\alpha=1/2}(\vec x\reduced) < \tfrac{1}{2}\delta(n)
        ~\text{and}~
        \maxConn_{\alpha=1/2}(\vec z\reduced) < \tfrac{1}{2}\delta(n)
        \right\} \\
    \{F_\sigma\} &:= \left\{ 
        P= X^{\vec x} Z^{\vec z} ~\vert~
        \abs{P}\leq \tfrac{1}{4}\delta(n)
        \right\}
\end{align}
where $\abs{P} := \abs{\vec x} + \abs{\vec z}$
and by construction $E_{\alpha}F_{\sigma}$ is correctable for all $\alpha, \sigma$.

Now, since $\mathcal V = \linspan\{E_\alpha \ket{\vec x_0, \vec z_0}\}$, for any state $\ket{\psi}\in\mathcal V$ we can define a code $\mathcal D$ as in \cref{lem:superposition_codes} such that $\ket{\psi}\in \mathcal D$ and the set $F_{\sigma}$ is correctable in this code. This means the code $\mathcal D$ has distance $d_{\mathcal D} > \tfrac{1}{4}\delta(n)$, and the result follows from \cref{thm:codes_lre}.
\end{proof}

The no-trivial state topological order of extremal Gibbs states around typical eigenstates is then a simple corollary. 

\begin{corollary}[Typical extremal Gibbs states are non-separable]\label{col:nts-to}
    Consider a qLDPC code with constant rate, linear $(\delta, \gamma)$-confinement, and with associated hamiltonian $\hamil$.
    Then at sufficiently low temperatures, extremal Gibbs states around typical eigenstates are long range entangled. In particular they are supported on a subspace, $\mathcal V$, which contains not states of trivial circuit complexity.
\end{corollary}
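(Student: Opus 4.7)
The proof plan is essentially a short extraction from the already-established \cref{thm:local_minimum_nlts}. The hard lifting---showing that \emph{every} pure state in $\mathcal V(\vec x_0,\vec z_0)$ has circuit complexity $\Omega(\log\delta(n))$---has already been done. What remains is to promote this pure-state statement to a statement about the extremal mixed state $\rhoGV$ supported on $\mathcal V$, and to invoke \cref{thm:botteleneck2} to guarantee that such a subspace exists for typical eigenstates.

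First I would fix a low temperature $T < 1/\beta^*$ with $\beta^*$ as in \cref{thm:botteleneck2}, and draw $\ket{\vec x_0,\vec z_0}$ from the Gibbs distribution. With probability $1-e^{-\Omega(\delta(n))}$, the subspace $\mathcal V = \mathcal V(\vec x_0,\vec z_0)$ of \autoref{eq:gibbs_subspaces} obeys the bottleneck condition, so by \cref{thm:bottleneck} the normalized projection $\rhoGV \propto P_{\mathcal V}\rhoG P_{\mathcal V}$ is an (approximate) extremal component of the Gibbs state. At finite $\beta$ the global Gibbs state $\rhoG$ has full rank on $\hilbert$, hence $P_{\mathcal V}\rhoG P_{\mathcal V}$ has full rank on $\mathcal V$, and $\mathrm{range}(\rhoGV) = \mathcal V$.

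Next I would apply \cref{thm:local_minimum_nlts}: every $\ket{\psi}\in\mathcal V$ requires a unitary circuit of depth at least $\Omega(\log\delta(n))$ to prepare, so no pure state in $\mathcal V$ is SRE. This already gives the second half of the corollary (``$\mathcal V$ contains no states of trivial circuit complexity''). For the first half---that $\rhoGV$ is LRE---I would argue that any convex decomposition $\rhoGV = \sum_j p_j \ketbra{\phi_j}{\phi_j}$ must have each $\ket{\phi_j}\in\mathrm{range}(\rhoGV) = \mathcal V$, so each $\ket{\phi_j}$ is LRE by the preceding step. Hence $\rhoGV$ is not a mixture of SRE states, i.e.\ $\rhoGV$ is long-range entangled.

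The only subtle point, and the one I would flag as a minor obstacle, is the \emph{approximate} version of the last step: one might wish to rule out that $\rhoGV$ is close in trace distance to a convex combination of SRE states supported \emph{outside} $\mathcal V$. This however follows from the circuit-complexity bound being uniform on the whole subspace $\mathcal V$ together with the fact that any state $\sigma$ with $\|\sigma-\rhoGV\|_1 \leq \epsilon$ must have $\tr(P_{\mathcal V}\sigma)\geq 1-\epsilon$, so any pure-state decomposition of $\sigma$ is concentrated on vectors with large component in $\mathcal V$, which themselves (by a small perturbation of Theorem 3 of \cite{bravyi2024entanglement}) cannot be prepared in sub-$\log\delta(n)$ depth. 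Given the tightness of this step, I would simply state the corollary at the level of the exact decomposition, which is the standard convention used elsewhere in the paper, and cite \cref{thm:local_minimum_nlts} together with \cref{thm:botteleneck2} as the two ingredients.
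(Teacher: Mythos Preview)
Your proposal is correct and mirrors the paper's own argument almost exactly: the paper simply notes that by \cref{thm:botteleneck2} a typical eigenstate is surrounded by a bottleneck subspace $\mathcal V$ containing at least one extremal Gibbs state, and that by \cref{thm:local_minimum_nlts} this $\mathcal V$ contains no trivial states, hence any extremal Gibbs state supported on it is non-separable. Your additional care in spelling out the full-rank/range argument and flagging the approximate-decomposition subtlety goes slightly beyond what the paper writes, but the core approach is identical.
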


By ``extremal Gibbs states around typical eigenstates'' we here mean that choosing a random eigenstate $\ket{\vec x, \vec z}$ at sufficiently low temperature, then with high probability [\cref{eq:typical_state}], the subspace $\mathcal V(\vec x, \vec z)$ contains at least one extremal Gibbs state by \cref{thm:botteleneck2}. All extremal Gibbs states contained in $\mathcal V(\vec x, \vec z)$ are non-separable since they are supported entirely on a subspace that contains no trivial states (\cref{thm:local_minimum_nlts}).

%
%
%
%

\section{Proof of the main result}\label{suppl:proof_profit}

\begin{proof}[Proof of \cref{thm:profit}]
    Consider first the properties of Gallager codes and their hypergraph products. 
    \cref{lem:gallager_no_redundancies}, states that Gallager codes for $\wcheck > \wbit \geq 2$ have no redundancies, that is $\rank H = m$, with high probability.
    For the $(n_0, 14, 15)$-ensemble this implies asymptotically a constant rate $r_0\to 1/15$
    Further, by \cref{lem:hgp_confinement} the hypergraph product of two codes from this ensemble is a qLDPC code that has rate $r = 1/421$, as well as boundary and co-boundary confinement with $\gamma > \tfrac{1}{2}(\wbit - 8) \geq 3$ and $\delta(n) > \sqrt{n}/421$.

    Property (1) is then fulfilled by the proof in \cref{suppl:ergo_breaking}, in particular we use the subspace defined in \cref{eq:gibbs_subspaces}, in which case the bottlneck ratio above reduces to a classical ratio \cref{eq:bottleneck_ratio_classical}, which in turn is bounded in \cref{thm:botteleneck2}, which just requires linear confinement with super-logarithmic $\delta(n)$. The constants $c_1$ and $c_2$ are identical to $c_1$ and $c_2$ in \cref{thm:botteleneck2}.

    \begin{figure}
        \centering
        \includegraphics{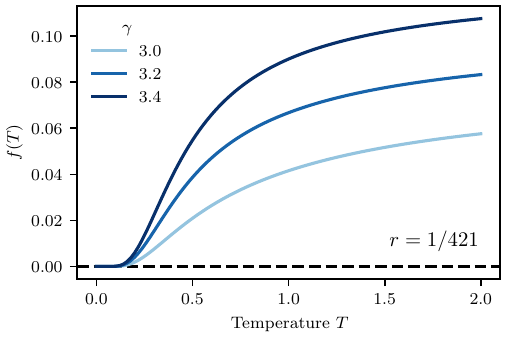}
        \caption{The function $f(T)$ as defined in the upper bound on the total weight of typical Gibbs states components in \cref{thm:upper_bound_typical_pi}, for the parameters relevant for the explicit instantiation of \cref{thm:profit} in \autoref{suppl:proof_profit}.}
        \label{fig:f_T_r_1_15}
    \end{figure}

    Property (2) follows directly from using \cref{thm:upper_bound_typical_pi}, which can be used because if two codes do not have redundancies, then their hypergraph product has no redundancies as well (cf. \cref{eq:hgp_def}).
    It is further easy to see that the function $f(T)$, as defined in \cref{thm:upper_bound_typical_pi}, is both positive and strictly increasing as a function of temperature $T$ for the parameters guaranteed by the construction, which we show explicitly in \autoref{fig:f_T_r_1_15}. Note that $f(T)$ at fixed $T$ increases as $\gamma$ is increased.
    
    Property (3) follows directly from the definition of $\mathcal V$ in \cref{eq:gibbs_subspaces} and \cref{thm:local_minimum_nlts}.

    Property (4) also follows directly from the definition of $\mathcal V$ in \cref{eq:gibbs_subspaces}, which fixes $c_3 = 1/4$, since $\mathcal V$ only includes correctable errors, and any local dynamics is confined to $\mathcal V$ for a time decaying as a (stretched) exponential in system size. 
\end{proof}

Note that while formally, we have shown the relevant properties only for the hypergraph product of Gallager codes with relatively large check weight, we expect topological quantum spin glass order to be realized much more generally in qLDPC codes with linear confinement. To this end, we note in particular that the bounds on linear confinement are expected to be quite loose, and that we have used other properties than linear confinement only for property (2), that is the lower bound on the configurational entropy which we also expect to be quite loose.

\bibliography{references.bib}

\end{document}